\let\coloneqq\relax
\newcolumntype{x}[1]{>{\centering\arraybackslash}p{#1}}
\newtheorem{thm}{Theorem}
\newtheorem*{thm*}{Theorem}
\newtheorem{prop}[thm]{Proposition}
\newtheorem*{prop*}{Proposition}
\newtheorem{lemma}[thm]{Lemma}
\newtheorem*{lemma*}{Lemma}
\newtheorem{cor}[thm]{Corollary}
\newtheorem*{cor*}{Corollary}
\newtheorem*{cj*}{Conjecture}
\newtheorem{Def}[thm]{Definition}
\newtheorem*{Def*}{Definition}
\newtheorem*{question*}{Question}
\newtheorem*{problem*}{Problem}
\def\thmhead@plain#1#2#3{%
  \thmname{#1}\thmnumber{\@ifnotempty{#1}{ }\@upn{#2}}%
  \thmnote{ {\the\thm@notefont#3}}}
\let\thmhead\thmhead@plain
\theoremstyle{definition}
\newtheorem{rem}[thm]{Remark}
\newtheorem{defi}[thm]{Definition}
\newcommand{\bb}{\begin{equation}\begin{aligned}\hspace{0pt}}
\newcommand{\bbb}{\begin{equation*}\begin{aligned}}
\newcommand{\ee}{\end{aligned}\end{equation}}
\newcommand{\eee}{\end{aligned}\end{equation*}}
\newcommand*{\coloneqq}{\mathrel{\vcenter{\baselineskip0.5ex \lineskiplimit0pt \hbox{\scriptsize.}\hbox{\scriptsize.}}} =}
\newcommand{\eqt}[1]{\stackrel{\mathclap{\scriptsize \mbox{#1}}}{=}}
\newcommand{\geqt}[1]{\stackrel{\mathclap{\scriptsize \mbox{#1}}}{\geq}}
\newcommand{\ketbra}[1]{\ket{#1}\!\!\bra{#1}}
\newcommand{\ketbraa}[2]{\ket{#1}\!\!\bra{#2}}
\newcommand{\sumno}{\sum\nolimits}
\renewcommand{\epsilon}{\varepsilon}
\newcommand{\id}{\mathds{1}}
\newcommand{\R}{\mathds{R}}
\DeclareMathOperator{\Tr}{Tr}
\DeclareMathAlphabet{\pazocal}{OMS}{zplm}{m}{n}
\DeclareMathOperator{\supp}{supp}
\DeclareMathOperator{\Id}{id}
\newcommand{\HH}{\mathcal{H}}
\newcommand{\NN}{\pazocal{N}}
\newcommand{\EE}{\pazocal{E}}
\newcommand{\lsmatrix}{\left(\begin{smallmatrix}}
\newcommand{\rsmatrix}{\end{smallmatrix}\right)}
\newcommand{\rel}[3]{#1\big(#2\,\big\|\,#3\big)}
\newcommand{\Rel}[3]{#1\Big(#2\,\Big\|\,#3\Big)}
\newcommand*\rel@kern[1]{\kern#1\dimexpr\macc@kerna}
\newcommand*\widebar[1]{%
  \begingroup
  \def\mathaccent##1##2{%
    \rel@kern{0.8}%
    \overline{\rel@kern{-0.8}\macc@nucleus\rel@kern{0.2}}%
    \rel@kern{-0.2}%
  }%
  \macc@depth\@ne
  \let\math@bgroup\@empty \let\math@egroup\macc@set@skewchar
  \mathsurround\z@ \frozen@everymath{\mathgroup\macc@group\relax}%
  \macc@set@skewchar\relax
  \let\mathaccentV\macc@nested@a
  \macc@nested@a\relax111{#1}%
  \endgroup
}
\tikzset{meter/.append style={draw, inner sep=10, rectangle, font=\vphantom{A}, minimum width=30, line width=.8, path picture={\draw[black] ([shift={(.1,.3)}]path picture bounding box.south west) to[bend left=50] ([shift={(-.1,.3)}]path picture bounding box.south east);\draw[black,-latex] ([shift={(0,.1)}]path picture bounding box.south) -- ([shift={(.3,-.1)}]path picture bounding box.north);}}}
\tikzset{roundnode/.append style={circle, draw=black, fill=gray!20, thick, minimum size=10mm}}
\tikzset{squarenode/.style={rectangle, draw=black, fill=none, thick, minimum size=10mm}}
\definecolor{Blues5seq1}{RGB}{239,243,255}
\definecolor{Blues5seq2}{RGB}{189,215,231}
\definecolor{Blues5seq3}{RGB}{107,174,214}
\definecolor{Blues5seq4}{RGB}{49,130,189}
\definecolor{Blues5seq5}{RGB}{8,81,156}
\definecolor{Greens5seq1}{RGB}{237,248,233}
\definecolor{Greens5seq2}{RGB}{186,228,179}
\definecolor{Greens5seq3}{RGB}{116,196,118}
\definecolor{Greens5seq4}{RGB}{49,163,84}
\definecolor{Greens5seq5}{RGB}{0,109,44}
\definecolor{Reds5seq1}{RGB}{254,229,217}
\definecolor{Reds5seq2}{RGB}{252,174,145}
\definecolor{Reds5seq3}{RGB}{251,106,74}
\definecolor{Reds5seq4}{RGB}{222,45,38}
\definecolor{Reds5seq5}{RGB}{165,15,21}
\newenvironment{boxedthm}[1]%
	{\expandafter\ifstrequal\expandafter{#1}{orange}{\begin{tcolorbox}[colback=red!15,colframe=orange!15,breakable,enhanced]}{\begin{tcolorbox}[colback=Blues5seq1,colframe=Blues5seq5,breakable,enhanced]}}%
	{\end{tcolorbox}}
\renewcommand{\EE}[1]{\underset{\scaleobj{.8}{#1}}{\mathds{E}\,}}
\begin{document}

\author[1]{Filippo Girardi$^\clubsuit$\thanks{\texttt{filippo.girardi@sns.it}\hfill $^\clubsuit$These authors contributed equally.}}
\author[1]{Francesco Anna Mele$^\clubsuit$\thanks{\texttt{francesco.mele@sns.it}\hfill $^\diamondsuit$These authors contributed equally.}}
\author[2]{Haimeng Zhao$^\clubsuit$\thanks{\texttt{haimengzhao@icloud.com}}}
\author[3]{Marco~Fanizza$^\diamondsuit$\thanks{\texttt{marco.fanizza@inria.fr}}}
\author[1]{Ludovico~Lami$^\diamondsuit$\thanks{\texttt{ludovico.lami@gmail.com}}}
\affil[1]{Scuola Normale Superiore, Piazza dei Cavalieri 7, 56126 Pisa, Italy}
\affil[2]{\mbox{Institute for Quantum Information and Matter, California Institute of Technology, Pasadena, CA 91125, USA}}
\affil[3]{Inria, T\'el\'ecom Paris -- LTCI, Institut Polytechnique de Paris, Palaiseau, France}

\title{\vspace*{-0.5em}Random Stinespring superchannel: \\ converting channel queries into \\ dilation isometry queries}

\date{}
\setcounter{Maxaffil}{0}
\renewcommand\Affilfont{\itshape\small}

\maketitle\vspace{-5ex}
\begin{abstract} 
The recently introduced random purification channel, which converts $n$ copies of an arbitrary mixed quantum state into $n$ copies of the same uniformly random purification, has emerged as a powerful tool in quantum information theory. Motivated by this development, we introduce a channel-level analogue, which we call the \emph{random Stinespring superchannel}. This consists in a procedure to transform $n$ parallel queries of an arbitrary quantum channel into $n$ parallel queries of the same uniformly random Stinespring isometry, via universal encoding and decoding operations that are efficiently implementable.  When the channel is promised to have Choi rank at most $r$, the procedure can be tailored to yield a Stinespring environment of dimension $r$. We present two applications of the random Stinespring superchannel, one in quantum Shannon theory and one in quantum learning theory. In quantum Shannon theory, we prove a channel-level analogue of Uhlmann’s theorem for quantum divergences. In quantum learning theory, our construction shows that tomography of quantum channels reduces to tomography of isometries. This yields a simple channel learning algorithm, based on existing isometry learning protocols, that matches the performance of the two recently proposed channel tomography algorithms.
Complementarily, whereas the optimality of these algorithms had previously been established only up to a logarithmic factor in
the dimension, we close this gap by removing this logarithmic factor from the
lower bound. Taken together, our results fully establish the optimality of
these recently introduced channel learning algorithms, showing that the
optimal query complexity of learning a quantum channel with input dimension
$d_A$, output dimension $d_B$, and Choi rank $r$ is $\Theta(d_A d_B r)$.\\
\end{abstract}


\section{Introduction}
The recently introduced random purification channel~\cite{tang2025, random_pur_simple}, which converts $n$ copies of a mixed quantum state $\rho$ into $n$ copies of the same randomly chosen purification of $\rho$, has already proved to be a very powerful tool in quantum information theory, with applications spanning quantum learning theory~\cite{pelecanos2025,Utsumi2025, AMele2025, WalterWitteveen_2025}, quantum Shannon theory~\cite{random_pur_simple}, and Gaussian quantum information~\cite{WalterWitteveen_2025, cv_purification}. Notably, the random purification channel admits a remarkably simple analytic form~\cite{random_pur_simple} and can also be implemented efficiently using quantum circuits~\cite{tang2025, pelecanos2025}. More precisely, for any Hilbert space $\mathcal{H}_A$ and any integer $n \geq 1$, there exists a quantum channel
\bb
\Lambda_{\rm purify}^{(n)}:\mathcal{L}(\mathcal{H}_A^{\otimes n})
\to \mathcal{L}\big((\mathcal{H}_A \otimes \mathcal{H}_B)^{\otimes n}\big),
\ee
where $\mathcal{H}_B$ is isomorphic to $\mathcal{H}_A$ and $\mathcal{L}(\HH)$ denotes the space of linear operators on $\mathcal{H}$, such that, for all states
$\rho_A \in \mathcal{D}(\mathcal{H}_A)$, one has~\cite{tang2025, random_pur_simple}
\bb \label{eq:property}
\Lambda^{(n)}_{\mathrm{purify}}(\rho_A^{\otimes n})
=
\EE{U_B}\!\left[
(\id_A \otimes U_B)\,
(\psi_\rho)_{AB}\,
(\id_A \otimes U_B^\dagger)
\right]^{\otimes n}.
\ee
where the expectation value is taken over Haar-random unitaries $U_B$ acting on
$\mathcal{H}_B$, $(\psi_\rho)_{AB}$ denotes an arbitrary fixed purification of
$\rho_A$ in $\mathcal{H}_A \otimes \mathcal{H}_B$, and $\id_A$ is the identity operator on $\mathcal{H}_A$. In other words, this channel transforms $n$ copies of $\rho_A$ into $n$ copies of a uniformly random purification of $\rho_A$. A simpler formula describing its action is~\cite{random_pur_simple}
\begin{equation}\label{compact_form}
\Lambda^{(n)}_{\mathrm{purify}}(\,\cdot\,)
=
\sqrt{R_n}\,
\bigl(\;\cdot\, \otimes \id_{B}^{\otimes n}\bigr)\,
\sqrt{R_n}\,,
\end{equation}
where $R_n \coloneqq \EE{U_B}\!\left[ \bigl(\id_{A} \otimes U_B \bigr)\, \Gamma_{AB}\, \bigl(\id_{A}  \otimes U_B^\dagger\bigr) \right]^{\otimes n}$, and $\Gamma_{AB} \coloneqq \sum_{i,j} \ketbraa{i}{j}_A\otimes \ketbraa{i}{j}_B$ is the unnormalised maximally entangled state.

The concept of purification of a state is only the first instance of the general idea that quantum information manipulation can be conceptually simplified by enlarging the underlying Hilbert space, an attitude colloquially known as the Church of the Larger Hilbert Space. Following this train of thought, the next logical step is the purification of quantum channels, called Stinespring dilation~\cite{Stinespring}:
for any quantum channel
$\Phi_{A\to B}:\mathcal{L}(\mathcal{H}_A)\to \mathcal{L}(\mathcal{H}_B)$, there
exists a Hilbert space $\mathcal{H}_E$, representing the environment, and an isometry
$V_{A\to BE}:\mathcal{H}_A \to \mathcal{H}_B \otimes \mathcal{H}_E$, called the Stinespring
isometry, such that
\bb
\Phi_{A\to B}(\,\cdot\,)=\Tr_E\!\left[V_{A\to BE}(\,\cdot\,)V_{A\to BE}^\dagger\right].
\ee
In particular, letting $d_A \coloneqq \dim \mathcal{H}_A$ and $d_B \coloneqq \dim \mathcal{H}_B$ denote the input and output dimensions of the
channel, the environment $\mathcal{H}_E$ can always be chosen to have dimension
$d_E = d_A d_B$. More generally, if the channel is promised to have \emph{Choi rank} $r$, defined
as the rank of the associated Choi state, then the environment can be taken to
have dimension $d_E = r \le d_A d_B$.

In the same spirit as for the random purification channel, one may therefore ask the
following question:
\begin{center}
    \emph{
    Can $n$ queries of a quantum channel $\Phi_{A\to B}$ be converted into $n$
    queries of a randomly chosen Stinespring isometry $V_{A\to BE}$?
    } 
\end{center}
In this paper, we answer this
question in the affirmative by exhibiting a procedure, called the
\emph{random Stinespring superchannel}---that converts $n$ parallel uses of a quantum channel
(that is, a single query of $\Phi_{A\to B}^{\otimes n}$) into $n$ parallel uses of
the same random Stinespring isometry (that is, a single query of
$V_{A\to BE}^{\otimes n}$). Notably, we also show that this procedure can be
implemented efficiently in terms of a quantum circuit. 
Moreover, our result proves Conjecture 1.8 of~\cite{tang2025} in the parallel-query setting and confirms the intuition, suggested by the analysis of~\cite{chen2025quantumchanneltomographyestimation}, that a random Stinespring superchannel should exist. More precisely, our main result is the following.
\begin{boxedthm}{}
\begin{thm}[(Random Stinespring superchannel)]\label{thm1}
Let $\mathcal{H}_A$ and $\mathcal{H}_B$ be Hilbert spaces of dimensions $d_A$ and $d_B$, respectively, and let $n,r\ge1$ such that $r\le d_Ad_B$. There exist an environment Hilbert space $\mathcal{H}_E$ of dimension $r$, an auxiliary Hilbert space $\mathcal{H}_M$, an encoding quantum channel
\[
\mathcal{E}:\mathcal{L}(\mathcal{H}_A^{\otimes n})
\to \mathcal{L}(\mathcal{H}_A^{\otimes n}\otimes \mathcal{H}_M),
\]
and a decoding quantum channel
\[
\mathcal{D}:\mathcal{L}(\mathcal{H}_B^{\otimes n}\otimes \mathcal{H}_M)
\to \mathcal{L}\left(\mathcal{H}_{B}^{\otimes n}\otimes\mathcal{H}_E^{\otimes n}\right),
\]
such that for every quantum channel
$\Phi:\mathcal{L}(\mathcal{H}_A)\to \mathcal{L}(\mathcal{H}_B)$ with Choi rank at
most $r$ it holds that
\bb\label{eq_random_isometry}
\big(\mathcal{D}\circ\left(\Phi^{\otimes n}\otimes {\rm Id}_M\right)\circ \mathcal{E}\big)(\,\cdot\,)
=
\EE{U_E}\!\left[
\big((\id_B\otimes U_E)V_{A\to BE}\big)^{\otimes n}
(\,\cdot\,)
\big(V_{A\to BE}^\dagger(\id_B\otimes U_E^\dagger)\big)^{\otimes n}
\right],
\ee
where the expectation value is taken over Haar-random unitaries $U_E$ acting on
$\mathcal{H}_E$, and $V_{A\to BE}$ is any fixed Stinespring isometry associated with
$\Phi$. In addition, both $\mathcal{E}$ and $\mathcal{D}$ can be implemented in
polynomial time in $n$ and in the logarithm of the dimensions of the Hilbert spaces
involved. In other words, $n$ parallel queries of $\Phi$ can be efficiently
converted into $n$ parallel queries of a uniformly random Stinespring isometry.
\end{thm}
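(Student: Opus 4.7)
The plan is to reduce the random Stinespring superchannel to the random purification channel $\Lambda^{(n)}_{\mathrm{purify}}$ via Choi--Jamio\l{}kowski duality. First I would observe that the Choi state of the target channel on the right-hand side of~\eqref{eq_random_isometry} coincides with $\Lambda^{(n)}_{\mathrm{purify}}$ (specialised to an environment of dimension $r$) applied to the $n$-fold tensor power of the Choi state $J(\Phi)$ of $\Phi$: indeed, the rank-$r$ purifications of $J(\Phi)$ are exactly Stinespring isometries of $\Phi$ up to a unitary on $\mathcal{H}_E$, and $\Lambda^{(n)}_{\mathrm{purify}}$ by definition outputs $n$ copies of a uniformly random such purification, all sharing the same Haar-random unitary $U_E$. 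This reduces the problem to realising the superchannel whose action on Choi states is $J(\Phi)^{\otimes n}\mapsto \Lambda^{(n)}_{\mathrm{purify}}[J(\Phi)^{\otimes n}]$.

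To implement this, I would define $\mathcal{E}$ to swap the input $\rho$ into a memory register inside $\mathcal{H}_M$ and simultaneously prepare $n$ fresh maximally entangled states $\ket{\Omega}_{AA'}^{\otimes n}$, sending the $\mathcal{H}_A^{\otimes n}$-halves to be fed into $\Phi^{\otimes n}$ and storing the $A'^{\otimes n}$-halves in the remainder of $\mathcal{H}_M$. After $\Phi^{\otimes n}\otimes \mathrm{Id}_M$, the joint state is then $J(\Phi)^{\otimes n}$ on $(\mathcal{H}_B\otimes\mathcal{H}_{A'})^{\otimes n}$ alongside $\rho$ in memory. The decoder $\mathcal{D}$ would first apply $\Lambda^{(n)}_{\mathrm{purify}}$ (rank-$r$ version) to the $(\mathcal{H}_B\otimes \mathcal{H}_{A'})^{\otimes n}$ block, producing the mixture of $n$ copies of the Choi state of the uniformly random Stinespring isometry $V''=(\id_B\otimes U_E)V$; and would then consume this state together with the stored input via port-wise gate teleportation across the $n$ pairs $(A_{\mathrm{stored}}^{(i)},A'^{(i)})$, transferring the output to $(\mathcal{H}_B\otimes \mathcal{H}_E)^{\otimes n}$.

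The technical heart of the proof is this last teleportation step. A standard Bell-measurement gate teleportation yields, conditional on outcome Paulis $\vec P=(P_1,\dots,P_n)$, the state $\EE{U_E}[V''^{\otimes n}\,\bar P^{\otimes}\rho\,\bar P^{\otimes\dagger}V''^{\otimes n\dagger}]$ on $(\mathcal{H}_B\otimes \mathcal{H}_E)^{\otimes n}$, where $\bar P^{\otimes}=\bar P_1\otimes\cdots\otimes \bar P_n$, and for a generic (non-Clifford) $V$ the residual Pauli perturbation on the input cannot be undone by any $V$-independent unitary on the output in an outcome-by-outcome fashion. The approach I would take is to exploit the diagonal Haar twirl $\EE{U_E}[(\id_B\otimes U_E)^{\otimes n}(\cdot)(\id_B\otimes U_E^\dagger)^{\otimes n}]$ over $\mathcal{H}_E^{\otimes n}$: using a Schur--Weyl / Weingarten computation, I would show that the effect of $\bar P^{\otimes}$ is absorbed into the Haar randomness of $U_E$ and can hence be exactly compensated by a $V$-independent, classically conditioned unitary correction on $\mathcal{H}_E^{\otimes n}$. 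This is the step I expect to require the most work.

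Finally, for the efficiency claim, Bell-pair preparation and the input swap in $\mathcal{E}$ have depth $O(\log d_A)$; $\Lambda^{(n)}_{\mathrm{purify}}$ admits an efficient quantum-circuit implementation by the prior work cited in the introduction; and the Bell measurements together with the classical-conditional corrections in $\mathcal{D}$ are of polylogarithmic cost. Composing these yields encoder and decoder of total complexity polynomial in $n$ and in the logarithms of the Hilbert-space dimensions involved, as claimed.
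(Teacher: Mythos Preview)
Your first step---identifying the Choi operator of the target map with $\Lambda^{(n)}_{\mathrm{purify}}$ applied to $J(\Phi)^{\otimes n}$---is correct and is exactly Proposition~2 in the paper; combined with the superchannel structure theorem, this already yields existence (for $r=d_Ad_B$) without any further work.

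The gap is in your explicit construction. The claim that the Pauli byproducts $\bar P^{\otimes}$ from gate teleportation can be absorbed into the Haar twirl on $\mathcal{H}_E^{\otimes n}$ is false. Conditional on Bell outcomes $\vec P$, the state on $(BE)^n$ is
\[
\EE{U_E}\!\left[(\id_{B^n}\otimes U_E^{\otimes n})\,V^{\otimes n}\,\bar P^{\otimes}\rho\,\bar P^{\otimes\dagger}\,(V^\dagger)^{\otimes n}\,(\id_{B^n}\otimes U_E^{\dagger\,\otimes n})\right],
\]
and any correction (unitary or channel) acting only on $E^{\otimes n}$ leaves the $B^{\otimes n}$ marginal intact; tracing out $E^n$ would then force $\Phi^{\otimes n}(\bar P^{\otimes}\rho\,\bar P^{\otimes\dagger})=\Phi^{\otimes n}(\rho)$ for all $\Phi,\rho$, which fails already for $\Phi=\Id$. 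The obstruction is structural, not technical: the Pauli errors sit on the \emph{input} side of the unknown channel, whereas the $U_E$-twirl lives on the \emph{output} environment, and there is no mechanism for one to absorb the other. Even allowing corrections on all of $(BE)^n$ does not help: for $n=1$ the target is $\Phi(\rho)\otimes\id_E/r$ while you hold $\Phi(\bar P\rho\bar P^\dagger)\otimes\id_E/r$, and no single $V$-independent map can repair this simultaneously for $\Phi=\Id$ and for a constant channel. The paper's explicit circuit avoids teleportation altogether: it encodes via a controlled permutation over a uniform superposition $\frac{1}{\sqrt{n!}}\sum_\sigma\ket{\sigma}$ in $\widehat{S_n}$, and decodes via the inverse controlled permutation, the QFT over $S_n$, a $\lambda$-measurement with a conditional maximally-mixed preparation, and a Schur transform. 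Correctness rests on the covariance $V^{\otimes n}U_\sigma=(U_\sigma\otimes U_\sigma)V^{\otimes n}$, which lets \emph{permutations}---unlike Paulis---pass from the input to the output of an unknown tensor-power channel.
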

\end{boxedthm}
\begin{figure}[t]
  \centering
  \def\svgwidth{\linewidth}
  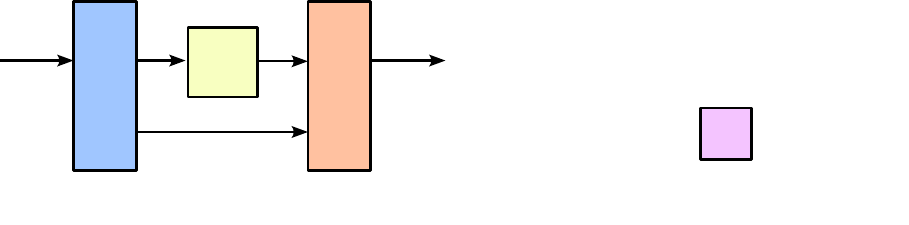
  \caption{Schematic representation of the random Stinespring superchannel introduced in Theorem~\ref{thm1}.}
  \label{fig:figure}
\end{figure}
The proof of this theorem is provided at the end of Section~\ref{explicit_circuit}. The random Stinespring superchannel is illustrated schematically in
Fig.~\ref{fig:figure}. At first glance, one might be tempted to think that the
procedure could be implemented by choosing the encoding channel
$\mathcal{E}={\rm Id}_{A\to A}$ and the decoding channel
$\mathcal{D}=\Lambda_{\rm purify}^{(n)}$, namely the random purification channel
defined in~\eqref{eq:property}. However, this naive approach fails, as $\Lambda_{\rm purify}^{(n)}$ automatically symmetrises its input, meaning that Eq.~\eqref{eq_random_isometry} would not be satisfied. (Another, more intuitive way to think about this is that the random purification channel at the output would also purify the input mixed states, which is not what Eq.~\eqref{eq_random_isometry} does.) In fact, our construction does not employ the
random purification channel as a subroutine. 

Crucially, the encoding and decoding channels we construct are independent of the input state and of whatever post-processing to which the output may be subjected. 
Our random Stinespring superchannel is therefore \emph{universal} and plays, at the level of quantum channels, the same conceptual role the random purification channel plays for quantum states. When the channel is a replacement channel that prepares a mixed state, our random Stinespring superchannel recovers the random purification channel, albeit with a larger environment. 

We present two applications of the random Stinespring superchannel: one in quantum Shannon theory and one in quantum learning theory.  On the quantum Shannon theory side, we use the random Stinespring superchannel to extend Uhlmann's theorem for quantum divergences~\cite{Mazzola_2025, Fang2025-variational,random_pur_simple} from quantum states to quantum channels. On the quantum learning theory side, our result is closely connected to the problem of quantum channel learning~\cite{AMele2025, chen2025quantumchanneltomographyestimation}.
Indeed, in the same spirit as Ref.~\cite{pelecanos2025}, where the random purification channel was used to show that mixed-state learning reduces to pure-state learning 
(and later generalised in Ref.~\cite{AMele2025} to show that quantum channel learning reduces to learning a purification of the Choi state), our construction immediately implies that \emph{quantum channel learning reduces to isometry learning}, specifically to learning a Stinespring isometry of the channel. This observation has been used very recently in~\cite{chen2025quantumchanneltomographyestimation} to provide another proof of the previously established formula for the query complexity of quantum channel learning, up to logarithmic dimensional factors~\cite{AMele2025}. Specifically, our procedure --- as well as 
that of~\cite{chen2025quantumchanneltomographyestimation} --- implies that tomography of quantum channels with input dimension $d_A$, output dimension $d_B$, and Choi rank $r$ reduces to tomography of isometries with input dimension $d_A$ and output dimension $d_B r$. By leveraging the upper bound on the query complexity of isometry learning found in~\cite{AMele2025,
chen2025quantumchanneltomographyestimation}, one readily obtains the upper bound $O(d_A d_B r)$ on the query complexity of quantum channel learning, which was recently established by~\cite{AMele2025} for the first time and was known to be optimal up to logarithmic dimensional factors. As a complementary result, we also prove a matching lower bound $\Omega(d_A d_B r)$, which holds even against the most general classes of queries, including those with inverse and controlled queries and indefinite causal order. 
We prove this lower bound by developing a proof technique that is purely algebraic and reinforces the simple intuition from dimension counting, completely circumventing the heavy representation theory machinery previously used for unitaries~\cite{haah2023query,bavaresco2022unitary}.
Taken together, these results establish that the optimal query complexity for quantum channel tomography is $\Theta(d_A d_B r)$, thereby removing the remaining logarithmic gap in the lower bound.
\bigskip

The paper is organised as follows. In Section~\ref{sec_existence}, we present a remarkably simple proof of Theorem~\ref{thm1}. In Section~\ref{explicit_circuit}, we describe an explicit and efficient quantum circuit implementation of the random Stinespring superchannel. In Section~\ref{appl_Shannon}, we apply this superchannel to quantum Shannon theory and derive an extension of Uhlmann's theorem for quantum divergences~\cite{Mazzola_2025,Fang2025-variational,random_pur_simple} to quantum channels. In Section~\ref{appl_learning}, we apply it to quantum learning theory, focusing on the problem of tomography of quantum channels ~\cite{AMele2025, chen2025quantumchanneltomographyestimation}, and derive the optimal query complexity of learning quantum channels without additional logarithmic factors. 
Finally, in Section~\ref{sec_con}, we summarise our results and outline several open problems
for future work.

\section{{A simple proof of Theorem~\ref{thm1} for $\boldsymbol{r=d_Ad_B}$}}\label{sec_existence}

This section presents a simple proof of the first part of Theorem~\ref{thm1} in the special
case of $r = d_A d_B$, without dealing with the implementation efficiency. More precisely, the goal of this section is to prove the existence of a
\emph{physical} supermap implementing the random Stinespring superchannel. Concretely, this amounts to showing that there exist an
encoding channel $\mathcal{E}$, a memory system $M$, and a decoding channel
$\mathcal{D}$ such that, for any quantum channel $\Phi_{A\to B}$, Eq.~\eqref{eq_random_isometry} holds,
where $V_{A\to BE}$ denotes a fixed Stinespring isometry of $\Phi$.

To address this question, we invoke the formalism of \emph{superchannels} introduced in Ref.~\cite{Chiribella2008}. By definition, a superchannel is a supermap $\mathcal{S}$ that maps quantum channels into quantum channels in a completely positive way. Formally, a linear map $\mathcal{S}$ taking as input maps $\mathcal{L}(\mathcal{H}_A)\to \mathcal{L}(\mathcal{H}_B)$ and outputting maps $\mathcal{L}(\mathcal{H}_{\tilde{A}})\to \mathcal{L}(\mathcal{H}_{\tilde{B}})$ is said to be a superchannel if: (a)~it is completely positive, in the sense that, for all completely positive maps $\NN:\mathcal{L}(\mathcal{H}_{A}\otimes \mathcal{H}_{E})\to \mathcal{L}(\mathcal{H}_{B} \otimes \mathcal{H}_F)$, where $E$ and $F$ are auxiliary quantum systems, the transformed map\footnote{To define the action of $\mathcal{S}\otimes \Id$, note that $\mathcal{L}\left(\mathcal{L}(\mathcal{H}_{A}\otimes \mathcal{H}_{E})\to \mathcal{L}(\mathcal{H}_{B} \otimes \mathcal{H}_F)\right)$ is canonically isomorphic to $\mathcal{L}\big(\mathcal{L}(\mathcal{H}_A)\to \mathcal{L}(\mathcal{H}_B)\big) \otimes \mathcal{L}\big(\mathcal{L}(\mathcal{H}_E)\to \mathcal{L}(\mathcal{H}_F)\big)$. We think of $\mathcal{S}$ as acting on the first tensor factor, and of $\Id$ as the identity operator acting on the second.} $(\mathcal{S}\otimes \Id)[\NN]: \mathcal{L}(\mathcal{H}_{\tilde{A}}\otimes \mathcal{H}_{E})\to \mathcal{L}(\mathcal{H}_{\tilde{B}} \otimes \mathcal{H}_F)$ is again completely positive; and (b)~it maps trace-preserving maps to trace-preserving maps. 

To simplify the picture, we can look at the action of $\mathcal{S}$ at the level of Choi operators. Denoting the associated map with $\mathcal{S}_\ast: \mathcal{L}(\mathcal{H}_A\otimes \mathcal{H}_B) \to \mathcal{L}\big(\mathcal{H}_{\tilde{A}}\otimes \mathcal{H}_{\tilde{B}}\big)$, condition~(a) is equivalent to requiring that $\mathcal{S}_\ast$ is completely positive, and condition~(b) is equivalent to demanding that it sends operators $X_{AB}$ on $\mathcal{H}_A\otimes \mathcal{H}_B$ such that $\Tr_B X_{AB} = \kappa \id_A$, for some $\kappa\in \R$, to operators $Y_{\tilde{A}\tilde{B}}$ such that $\Tr_{\tilde{B}} Y_{\tilde{A}\tilde{B}} = \kappa \id_{\tilde{A}}$. 

A central result of Ref.~\cite{Chiribella2008} provides a useful characterisation of superchannels.

\begin{lemma}[{\cite[Theorem 1]{Chiribella2008}}] \label{lemma_giulio}
For any superchannel $\mathcal{S}$ there exist an encoding channel $\mathcal{E}$, a memory system $M$, and a decoding channel $\mathcal{D}$ such that, for any quantum channel $\Phi$,
\bb
\mathcal{S}[\Phi]
=
\mathcal{D}\circ\left(\Phi\otimes {\rm Id}_M\right)\circ \mathcal{E}.
\ee
\end{lemma}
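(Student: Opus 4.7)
The plan is to work entirely at the level of the Choi representation of the superchannel $\mathcal{S}$, and to read off $\mathcal{E}$ and $\mathcal{D}$ by performing a two-stage purification of the associated Choi operator. First I would introduce the Choi operator $C_\mathcal{S}\in\mathcal{L}(\mathcal{H}_{\tilde A}\otimes\mathcal{H}_A\otimes\mathcal{H}_B\otimes\mathcal{H}_{\tilde B})$ of $\mathcal{S}_\ast$ and translate the two superchannel axioms into concrete algebraic constraints on $C_\mathcal{S}$. Condition (a) is equivalent to $C_\mathcal{S}\ge 0$, while condition (b) is equivalent to the existence of a positive operator $N_{\tilde A A}\in\mathcal{L}(\mathcal{H}_{\tilde A}\otimes \mathcal{H}_A)$ with $\Tr_A N_{\tilde A A}=\id_{\tilde A}$ such that $\Tr_{\tilde B}C_\mathcal{S}=N_{\tilde A A}\otimes \id_B$. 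In particular, $N_{\tilde A A}$ is a bona fide Choi operator of a CPTP map $\mathcal{E}_0:\mathcal{L}(\mathcal{H}_{\tilde A})\to\mathcal{L}(\mathcal{H}_A)$.

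Second, I would define the encoder by Stinespring-dilating $\mathcal{E}_0$: pick a memory Hilbert space $\mathcal{H}_M$ of dimension at most $d_{\tilde A}d_A$ and an isometry $V_\mathcal{E}:\mathcal{H}_{\tilde A}\to\mathcal{H}_A\otimes\mathcal{H}_M$ such that $\mathcal{E}_0(\cdot)=\Tr_M[V_\mathcal{E}(\cdot)V_\mathcal{E}^\dagger]$, and set $\mathcal{E}(\cdot)\coloneqq V_\mathcal{E}(\cdot)V_\mathcal{E}^\dagger$. This single isometry fully captures the $(\tilde A,A)$-sector structure encoded in the marginal $N_{\tilde A A}$ of $C_\mathcal{S}$.

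The core step is the construction of $\mathcal{D}$. Here I would invoke a uniqueness-of-purification argument for PSD operators: $C_\mathcal{S}$ is a PSD extension of $N_{\tilde A A}\otimes \id_B$ from $\mathcal{H}_{\tilde A}\otimes\mathcal{H}_A\otimes\mathcal{H}_B$ to include the $\tilde B$-factor, and $V_\mathcal{E}$ (tensored with a maximally entangled vector on a duplicate copy of $B$) provides a canonical ``half-purification'' of $N_{\tilde A A}\otimes \id_B$ into $\mathcal{H}_A\otimes\mathcal{H}_M\otimes\mathcal{H}_B$. By essential uniqueness of purifications, there must therefore exist a PSD operator $K\in\mathcal{L}(\mathcal{H}_B\otimes\mathcal{H}_M\otimes\mathcal{H}_{\tilde B})$ such that $C_\mathcal{S}$ equals the image of $K$ under the natural embedding determined by $V_\mathcal{E}$. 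The condition $\Tr_{\tilde B}C_\mathcal{S}=N_{\tilde A A}\otimes \id_B$, combined with $V_\mathcal{E}^\dagger V_\mathcal{E}=\id_{\tilde A}$, then forces $\Tr_{\tilde B}K=\id_{BM}$, so that $K$ is itself the Choi operator of a CPTP map $\mathcal{D}:\mathcal{L}(\mathcal{H}_B\otimes\mathcal{H}_M)\to\mathcal{L}(\mathcal{H}_{\tilde B})$.

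Finally, I would verify the factorization $\mathcal{S}[\Phi]=\mathcal{D}\circ(\Phi\otimes {\rm Id}_M)\circ \mathcal{E}$ for every quantum channel $\Phi$ by computing the Choi operators of both sides as link products involving $C_\Phi$, $V_\mathcal{E}$, and $K$, and by showing that they coincide through the identity produced in the previous paragraph. The main obstacle is precisely that third step, namely producing $K$ and verifying both its positivity and the correct normalization of its marginal. Morally this is a statement about the essentially unique dilation of PSD operators with prescribed marginals; the real delicacy is almost entirely in the bookkeeping of transposes and reshufflings dictated by the Choi isomorphism. Once $K$ is in hand, the definition of $\mathcal{E}$, the reinterpretation of $K$ as the Choi of $\mathcal{D}$, and the final link-product verification all follow routinely.
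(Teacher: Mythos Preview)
The paper does not prove this lemma at all: it is quoted verbatim as \cite[Theorem~1]{Chiribella2008} and used as a black box, so there is no ``paper's own proof'' to compare against. Your sketch is essentially the original Chiribella--D'Ariano--Perinotti argument: characterise the Choi operator $C_{\mathcal S}$ of the superchannel by $C_{\mathcal S}\ge 0$ together with the marginal condition $\Tr_{\tilde B}C_{\mathcal S}=N_{\tilde A A}\otimes\id_B$, dilate the CPTP map with Choi $N_{\tilde A A}$ to an isometric encoder, and then use uniqueness of purifications/extensions to produce the decoder's Choi $K$. That is the right strategy, and the final link-product check is indeed routine.

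One point worth tightening in your third step: the purification argument only determines $K$ on the support of $(V_{\mathcal E}\otimes\id_B)(\mathcal{H}_{\tilde A}\otimes\mathcal{H}_B)$ inside $\mathcal{H}_A\otimes\mathcal{H}_M\otimes\mathcal{H}_B$, so the conclusion $\Tr_{\tilde B}K=\id_{BM}$ follows only on that subspace. You should say explicitly that $K$ is then extended to the orthogonal complement (e.g.\ by tensoring with an arbitrary fixed state on $\tilde B$) so that the decoder is a genuine CPTP map on all of $\mathcal{H}_B\otimes\mathcal{H}_M$; this does not affect the factorisation because the encoder never populates that complement. With that caveat, your plan reproduces the cited proof.
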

In light of Lemma~\ref{lemma_giulio}, proving Theorem~\ref{thm1} reduces to
establishing that the mapping
\bb\label{eq:mapping}
\Phi^{\otimes n}(\,\cdot\,)\quad \longmapsto\quad 
\EE{U_E}\!\left[
\big((\id_B\otimes U_E)V_{A\to BE}\big)^{\otimes n}
(\,\cdot\,)
\big(V_{A\to BE}^\dagger(\id_B\otimes U_E^\dagger)\big)^{\otimes n}
\right]
\ee
is implementable as a superchannel. 
The following proposition, combined with Lemma~\ref{lemma_giulio}, ensures that, for any $n\geq 1$, there exists a superchannel $\mathcal{S}^{(n)}$ that implements~\eqref{eq:mapping}.

\begin{prop}\label{prop:Choi}
Let us consider the linear map $\mathcal{S}^{(n)}_\ast: \mathcal{L}(\mathcal{H}_{A^n}\otimes \mathcal{H}_{B^n}) \to \mathcal{L}\big(\mathcal{H}_{A^n}\otimes \mathcal{H}_{(BE)^n}\big)$, where $n\geq 1$ is an arbitrary integer, defined as
\bb\label{eq:S_ast}
    \mathcal{S}^{(n)}_\ast(X_{(AB)^n})\coloneqq \Lambda^{(n)}_{\rm purify}\left(X_{(AB)^n}\right).
\ee
Then there exists a superchannel $\mathcal{S}^{(n)}$ taking as input maps $\mathcal{L}(\mathcal{H}_{A^n})\to \mathcal{L}(\mathcal{H}_{B^n})$ and outputting maps $\mathcal{L}(\mathcal{H}_{A^n})\to \mathcal{L}(\mathcal{H}_{(BE)^n})$ whose action on Choi states is given by $\mathcal{S}_\ast^{(n)}$.
\end{prop}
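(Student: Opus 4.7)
The plan is to verify directly, at the Choi level, the two defining conditions of a superchannel recalled just before the proposition. Condition~(a), complete positivity of $\mathcal{S}^{(n)}_\ast$, is immediate since $\Lambda^{(n)}_{\mathrm{purify}}$ is itself a quantum channel, hence a CP linear map on $\mathcal{L}(\HH_{AB}^{\otimes n})$. The nontrivial content lies in condition~(b): for every $X\in\mathcal{L}(\HH_{A^n}\otimes\HH_{B^n})$ with $\Tr_{B^n} X = \kappa\,\id_{A^n}$, one must show $\Tr_{(BE)^n}\Lambda^{(n)}_{\mathrm{purify}}(X)=\kappa\,\id_{A^n}$. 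I would deduce this from the auxiliary identity
\[
\mathcal{T}\;:=\;\Tr_{E^n}\circ\Lambda^{(n)}_{\mathrm{purify}}\;=\;\pi_{\mathrm{sym}}^{(AB)^n},\qquad \pi_{\mathrm{sym}}^{(AB)^n}(Y)\;:=\;\tfrac{1}{n!}\sum_{\sigma\in S_n}S_\sigma\, Y\, S_\sigma^{\dagger},
\]
where $S_\sigma$ permutes the $n$ copies of $\HH_{AB}$. Granted this identity, $S_n$-equivariance of the partial trace yields $\Tr_{(BE)^n}\Lambda^{(n)}_{\mathrm{purify}}(X)=\Tr_{B^n}\pi_{\mathrm{sym}}^{(AB)^n}(X)=\pi_{\mathrm{sym}}^{A^n}(\Tr_{B^n} X)=\pi_{\mathrm{sym}}^{A^n}(\kappa\,\id_{A^n})=\kappa\,\id_{A^n}$, the last step holding because $\id_{A^n}$ is already $S_n$-invariant.

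To establish $\mathcal{T}=\pi_{\mathrm{sym}}^{(AB)^n}$, I plan a representation-theoretic twirling argument resting on three observations. First, specializing Eq.~\eqref{eq:property} with $\HH_A$ replaced by $\HH_{AB}$ and $\HH_B$ by $\HH_E$ and tracing out $E^n$ on both sides yields $\mathcal{T}(\rho^{\otimes n})=\rho^{\otimes n}$ for every state $\rho$ on $\HH_{AB}$; by the polarization identity, the complex span of the $\rho^{\otimes n}$ is all of $\mathrm{Sym}^n(\mathcal{L}(\HH_{AB}))$, so $\mathcal{T}$ is the identity on every $S_n$-invariant operator. Second, by Eq.~\eqref{compact_form}, $R_n$ is a Haar average of rank-one operators of the form $\bigl(\ket{\Gamma_{U_E}}\bra{\Gamma_{U_E}}\bigr)^{\otimes n}$, each of which is supported on $\mathrm{Sym}^n(\HH_{AB}\otimes\HH_E)$ (a tensor power of a single vector is symmetric); hence so are $R_n$ and $\sqrt{R_n}$, making $\Lambda^{(n)}_{\mathrm{purify}}(X)=\sqrt{R_n}(X\otimes\id_{E^n})\sqrt{R_n}$ an operator supported on this symmetric subspace of $\HH_{(AB)E}^{\otimes n}$ and thus $S_n$-invariant; the partial trace over $E^n$ preserves invariance, so $\mathcal{T}(X)$ is $S_n$-invariant on $\HH_{AB}^{\otimes n}$. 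Third, the same $S_n$-invariance of $R_n$, together with equivariance of partial trace, shows that $\mathcal{T}$ itself is $S_n$-equivariant. Combining: for every $\sigma\in S_n$, $\mathcal{T}(X)=S_\sigma\mathcal{T}(X)S_\sigma^{\dagger}=\mathcal{T}(S_\sigma X S_\sigma^{\dagger})$; averaging over $\sigma$ gives $\mathcal{T}(X)=\mathcal{T}(\pi_{\mathrm{sym}}^{(AB)^n}(X))$, and this equals $\pi_{\mathrm{sym}}^{(AB)^n}(X)$ by the first observation.

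The conceptual obstacle, foreshadowed in the remark after Theorem~\ref{thm1}, is precisely that $\mathcal{T}\neq\Id$ on non-symmetric inputs: the random purification channel automatically symmetrizes the $(AB)^n$-register. Since a generic channel $A^n\to B^n$ does not factor as $\Phi^{\otimes n}$, its Choi operator is generically non-symmetric, and one might fear that condition~(b) breaks there. The plan above dispels this worry by showing that the symmetrization lands harmlessly: $\Tr_{B^n}$ collapses $X$ to $\kappa\,\id_{A^n}$, which is already $S_n$-invariant and therefore untouched by $\pi_{\mathrm{sym}}$, so the unwanted symmetrization drops out exactly where condition~(b) needs to hold.
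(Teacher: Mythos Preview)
Your proof is correct and follows essentially the same route as the paper: both arguments verify condition~(b) by establishing the identity $\Tr_{E^n}\circ\Lambda^{(n)}_{\mathrm{purify}}=\pi_{\mathrm{sym}}^{(AB)^n}$ (the paper's $\mathcal{P}_{AB}^{(n)}$), and then use $S_n$-equivariance of the partial trace together with the invariance of $\id_{A^n}$. The only difference is packaging: the paper invokes as known the two properties ``$\Lambda^{(n)}_{\mathrm{purify}}$ symmetrizes its input'' and ``on permutation-invariant inputs, tracing out $E^n$ recovers the original'', whereas you re-derive them explicitly as your second/third and first observations via Eqs.~\eqref{eq:property} and~\eqref{compact_form}.
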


Before proving Proposition~\ref{prop:Choi}, let us briefly explain why the existence of the random Stinespring superchannel is a direct consequence of the above statements.

\begin{cor} For any $n\geq 1$, there exist an encoding channel $\mathcal{E}$, a memory system $M$, and a decoding channel $\mathcal{D}$ such that, for any quantum channel $\Phi_{A\to B}$,
\bb\label{eq:9}
\mathcal{D}\circ\left(\Phi^{\otimes n}\otimes {\rm Id}_M\right)\circ \mathcal{E} 
= \EE{U_E}\!\left[
\big((\id_B\otimes U_E)V_{A\to BE}\big)^{\otimes n}
(\,\cdot\,)
\big(V_{A\to BE}^\dagger(\id_B\otimes U_E^\dagger)\big)^{\otimes n}
\right],
\ee
where the expectation value is taken over Haar-random unitaries $U_E$ acting on
$\mathcal{H}_E$, and $V_{A\to BE}$ is any fixed Stinespring isometry associated with $\Phi_{A\to B}$.
\end{cor}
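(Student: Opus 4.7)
The plan is to combine Proposition~\ref{prop:Choi} and Lemma~\ref{lemma_giulio} directly, and then verify with a short Choi-operator calculation that the supermap they produce, when evaluated at $\Phi^{\otimes n}$, coincides with the Haar-twirled channel on the right-hand side of~\eqref{eq:9}. Proposition~\ref{prop:Choi} yields a superchannel $\mathcal{S}^{(n)}$ whose Choi-level action is $\mathcal{S}^{(n)}_\ast = \Lambda^{(n)}_{\mathrm{purify}}$, and Lemma~\ref{lemma_giulio} then writes $\mathcal{S}^{(n)}[\Psi] = \mathcal{D}\circ(\Psi\otimes\Id_M)\circ\mathcal{E}$ for some fixed $\mathcal{E}$, $M$, $\mathcal{D}$ and every channel $\Psi$. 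Setting $\Psi=\Phi^{\otimes n}$, the only remaining task is to show that $\mathcal{S}^{(n)}[\Phi^{\otimes n}]$ agrees with the right-hand side of~\eqref{eq:9}.

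Since a channel is determined by its Choi operator, I would compare both sides at the Choi level. Let $V\equiv V_{A\to BE}$ be a fixed Stinespring isometry of $\Phi$, let $|\Omega\rangle_{A'A}\coloneqq \sum_i|i\rangle_{A'}|i\rangle_A$, and set $|\zeta\rangle_{A'BE}\coloneqq (\id_{A'}\otimes V)|\Omega\rangle_{A'A}$. Tracing out $\mathcal{H}_E$ from $\ketbra{\zeta}_{A'BE}$ recovers the Choi operator of $\Phi$ on $\mathcal{H}_{A'}\otimes\mathcal{H}_B$, so $|\zeta\rangle$ is a purification of that Choi operator with purifying system $\mathcal{H}_E$. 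Computing the Choi operator of the right-hand side of~\eqref{eq:9} reduces to feeding one half of $\Gamma_{A'A}^{\otimes n}$ through the twirled map; because the latter factorizes across the $n$ copies, the output is
\bb
\EE{U_E}\!\left[\left(\big(\id_{A'B}\otimes U_E\big)\,\ketbra{\zeta}_{A'BE}\,\big(\id_{A'B}\otimes U_E^\dagger\big)\right)^{\otimes n}\right].
\ee
Invoking Eq.~\eqref{eq:property} with purification $|\zeta\rangle_{A'BE}$, this coincides with $\Lambda^{(n)}_{\mathrm{purify}}$ applied to the Choi operator of $\Phi^{\otimes n}$, which by Proposition~\ref{prop:Choi} is the Choi operator of $\mathcal{S}^{(n)}[\Phi^{\otimes n}]$. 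Equality of the two channels then follows.

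The only delicate point is the tensor-factor bookkeeping: one must identify the input register of $V_{A\to BE}$ with the ``$A$'' leg of the maximally entangled state used in the Choi definition, and match the purifying system appearing in~\eqref{eq:property} with the Stinespring environment $\mathcal{H}_E$. Once this identification is fixed, the rest is an immediate application of~\eqref{eq:property} and Proposition~\ref{prop:Choi}, and I do not expect any substantive obstacle beyond this careful tracking of indices.
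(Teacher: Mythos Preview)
Your proposal is correct and follows essentially the same route as the paper: invoke Proposition~\ref{prop:Choi} to obtain the superchannel $\mathcal{S}^{(n)}$ with $\mathcal{S}^{(n)}_\ast=\Lambda^{(n)}_{\mathrm{purify}}$, use Lemma~\ref{lemma_giulio} to realise it as $\mathcal{D}\circ(\cdot\otimes\Id_M)\circ\mathcal{E}$, and verify equality with the right-hand side of~\eqref{eq:9} at the Choi level by recognising $(\id_{A'}\otimes V)\ket{\Omega}_{A'A}$ as a purification of $J^\Phi_{A'B}$ and applying~\eqref{eq:property}. The only cosmetic difference is that the paper explicitly unpacks the Choi-to-channel inversion via the transpose trick, whereas you simply appeal to the fact that equal Choi operators imply equal channels; both are fine.
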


\begin{proof}
Let $\mathcal{S}^{(n)}$ be the superchannel constructed in Proposition~\ref{prop:Choi}. 
Then, for any channel $\Phi_{A\to B}$, the Choi operator $J^{\mathcal{S}^{(n)}[\Phi^{\otimes n}]}_{(A'BE)^n}$ of $\mathcal{S}^{(n)}[\Phi^{\otimes n}]$ can be written in terms of the Choi operator $J^{\Phi^{\otimes n}}_{(A'B)^n}=(J^{\Phi}_{A'B})^{\otimes n}$ of $\Phi^{\otimes n}$, where $J^{\Phi}_{A'B} \coloneqq \Phi_{A\to B}(\Gamma_{AA'})$, as
\bb \label{eq:10}
J^{\mathcal{S}^{(n)}[\Phi^{\otimes n}]}_{(A'BE)^n} &= \mathcal{S}^{(n)}_\ast\big(J^{\Phi^{\otimes n}}_{(A'B)^n}\big)\\
&= \Lambda^{(n)}_{\rm purify}\left((J^{\Phi}_{A'B})^{\otimes n}\right)\\
&= \Lambda^{(n)}_{\rm purify}\left(\Tr_{E^n}\!\left[V_{A\to BE}^{\otimes n}\Gamma_{A'A}^{\otimes n}V_{A\to BE}^{\dagger\, \otimes n}\right]\right)\\
&\eqt{(i)}\EE{U_E}\left[\big((\id_B\otimes U_E)V_{A\to BE}\big)^{\otimes n}
\Gamma_{A'A}^{\otimes n}
\big(V_{A\to BE}^\dagger(\id_B\otimes U_E^\dagger)\big)^{\otimes n}\right],
\ee
where $\mathcal{S}_\ast^{(n)}$ was introduced in~\eqref{eq:S_ast}, and $V_{A\to BE}$ is any arbitrary Stinespring representation of $\Phi_{A\to B}$. In~(i) we have observed that $V_{A\to BE}\ket{\Gamma}_{A'A}$ is a legitimate purification of $J^{\Phi}_{A'B}$. The last identity follows from~\eqref{eq:property}. 
Now, by Choi's theorem,~\eqref{eq:10} immediately implies that, for any $\rho_{A^n}\in \mathcal{L}(\mathcal{H}_{A^n})$,
\bb
\mathcal{S}^{(n)}[\Phi^{\otimes n}](\rho_{A^n}) &= \Tr_{{A'}^n}\!\left[\big(\rho_{{A'}^n}^{\intercal}\otimes\id_{B^nE^n}\big)\, J^{\mathcal{S}[\Phi^{\otimes n}]}_{(A'BE)^n} \right] \\
&\eqt{(ii)} \EE{U_E} \Tr_{{A'}^n}\!\left[\big((\id_B\otimes U_E)V_{A\to BE}\big)^{\otimes n}\big(\id_{{A'}^n} \otimes \rho_{A^n} \big) \Gamma_{A'A}^{\otimes n} \big(V_{A\to BE}^\dagger(\id_B\otimes U_E^\dagger)\big)^{\otimes n}\right] \\
&\eqt{(iii)} \EE{U_E}\left[\big((\id_B\otimes U_E)V_{A\to BE}\big)^{\otimes n}\rho_{A^n} \big(V_{A\to BE}^\dagger(\id_B\otimes U_E^\dagger)\big)^{\otimes n}\right],
\ee
where in~(ii) we have commuted $\rho_{{A'}^n}^\intercal$ past the isometry and transferred its action on $A^n$ using the transpose trick, while in~(iii) we have observed that $\Tr_{A'}\Gamma_{A'A} = \id_A$. 

Since $\mathcal{S}$ is a superchannel, according to Lemma~\ref{lemma_giulio} it can be implemented by means of an encoder, a decoder and a memory. Hence~\eqref{eq:9} holds true, and this concludes the proof.
\end{proof}

We are only left with the proof of Proposition~\ref{prop:Choi}, which consists in a simple verification of the conditions~(a) and~(b) discussed at the beginning of this section.

\begin{proof}[Proof of Proposition~\ref{prop:Choi}.]
The map $\mathcal{S}^{(n)}_\ast$ is manifestly completely positive, as $\Lambda^{(n)}_{\rm purify}$ is completely positive. Now, let $X_{(AB)^n}$ be an operator on $\mathcal{H}_{A^n}\otimes \mathcal{H}_{B^n}$ such that $\Tr_{B^n} X_{(AB)^n} = \kappa \id_{A^n}$. Then,
\bb
\Tr_{B^nE^n} \mathcal{S}_\ast^{(n)}(X_{(AB)^n}) 
&\eqt{(i)}\Tr_{B^n}\Tr_{E^n}\Lambda^{(n)}_{\rm purify}\left(\mathcal{P}_{AB}^{(n)}\big(X_{(AB)^n}\big)\right)\\
&\eqt{(ii)}\Tr_{B^n}\left[\mathcal{P}_{AB}^{(n)}\big(X_{(AB)^n}\big)\right]\\
&=\mathcal{P}_{A}^{(n)}\left(\Tr_{B^n}\!\left[X_{(AB)^n}\right]\right)\\
&\eqt{(iii)}\kappa\id_{A^n},
\ee
where we have called $\mathcal{P}_{AB}^{(n)}$ and $\mathcal{P}_{A}^{(n)}$ the unital channels
\bb
    \mathcal{P}_{AB}^{(n)}(\,\cdot\,)&\coloneqq \frac{1}{n!}\sum_{\pi \in S_n} \big(P^{A^n}_{\pi}\otimes P^{B^n}_{\pi}\big) (\,\cdot\,) \big(P^{A^n}_{\pi}\otimes P^{B^n}_{\pi}\big)^\dagger,\\
    \mathcal{P}_{A}^{(n)}(\,\cdot\,)&\coloneqq \frac{1}{n!}\sum_{\pi \in S_n} P^{A^n}_{\pi} (\,\cdot\,)  P^{A^n\, ^\dagger}_{\pi},
\ee
respectively. In (i) we have used that the random purification channel symmetrises the input, in (ii) we have recalled that, for permutation invariant inputs, the output of the random purification reduces to the original state when tracing out the auxiliary system $E^n$, and in (iii) we have noticed that the unital channel $\mathcal{P}_{A}^{(n)}$ acts on  $\Tr_{B^n}\left[X_{(AB)^n}\right]=\kappa\id_{A^n}$. This concludes the proof.
\end{proof}

\section{Proof of Theorem~\ref{thm1} and quantum circuit for the random Stinespring superchannel}\label{explicit_circuit}

\subsection{Representation theory}
In this section, we provide a brief overview of the representation-theoretic tools required for our analysis. For a detailed introduction to these topics, we refer the readers to Refs.~\cite{hayashi_group_2017, Hayashi2016_grouptheoretic}.

Let $\mathcal{H}$ be a Hilbert space of dimension $d$. Let $\mathrm{U}(d)$ be the group of unitary matrices of size $d\times d$, and let $S_{n}$ be the symmetric group of $n$ elements. The space $\mathcal{H}^{\otimes n}$ hosts a representation of the group $\mathrm{U}(d)$ as the action of $U^{\otimes n}$, where $U\in \mathrm{U}(d)$, and a representation of $S_n$, as the action of permutations of the $n$ systems. We denote the permutation unitary corresponding to the permutation $\sigma$ as $U_{\sigma}$. The irreducible representations of $\mathrm{U}(d)$ and $S_n$ are labeled by \textit{Young diagrams}, i.e. ordered partitions $\lambda$ of $n$. The actions of the representations of $\mathrm{U}(d)$ and $S_n$ commute, and as a representation of $S_n\times \mathrm{U}(d) $ we have the following decomposition into irreducible representations of $\mathcal{H}^{\otimes n}$ (\textit{Schur--Weyl} duality~\cite{goodman_symmetry_2009,hayashi_group_2017}):

\begin{equation}\label{eq:SchurWeyl}
\mathcal{H}^{\otimes n}=\bigoplus_{\lambda\vdash n}[\lambda]\otimes \mathcal{U}_{d,\lambda}\,,
\end{equation}
where the sum runs over all lists of integers $\lambda=(\lambda_1,\cdots,\lambda_{l(\lambda)})$ with $l(\lambda)\leq d$ and \mbox{$\lambda_1+\cdots+\lambda_{l(\lambda)}=n$}. 
Moreover, $\mathcal{U}_{d,\lambda}$ is an irreducible representation of $\mathrm{U}(d)$ of dimension $\mathrm{dim}[\mathcal{U}_{d,\lambda}]$, and $[\lambda]$ is an irreducible representation of $S_n$ of dimension $\mathrm{dim}[\lambda]$. A preferred basis for each representation space $[\lambda]$ is the Young--Yamanouchi basis; we denote the associated matrix elements of the representation $\lambda$ evaluated on $\sigma\in S_n$ as $R_{\lambda}(\sigma)_{k,l}$. Note that, with this choice, the representation matrices are real-valued. The character of the representation $[\lambda]$ is denoted as $\chi_{\lambda}(\sigma)=\Tr[R_{\lambda}(\sigma)]$, and it is also clearly real-valued. For the unitary representation spaces $\mathcal{U}_{d,\lambda}$, a canonical choice is the Gelfand--Tsetlin basis. These two basis choices give rise to a basis of $\mathcal{H}^{\otimes n}$ that respects the structure of the decomposition of Eq.~\eqref{eq:SchurWeyl}: we write that basis as $\{\ket{\lambda,i,\alpha}\}_{\lambda\vdash n,\, i\in [\mathrm{dim[\lambda]}],\, \alpha\in{\mathrm{dim}[\mathcal{U}_{d,\lambda}]}}$.
The \textit{Schur transform} is the unitary operator that rotates this basis into the canonical basis~\cite{bacon_efficient_2006,harrow_applications_2005, krovi_efficient_2019, burchardt_high-dimensional_2025}. 
The isotypical projector $\Pi_{\lambda}$ on the subspace $ [\lambda]\otimes \mathcal{U}_{d,\lambda}$ in~\eqref{eq:SchurWeyl} can be written as
\begin{equation}\label{eq:isoproj}
\Pi_{\lambda}=\frac{\mathrm{dim}[\lambda]}{n!}\sum_{\sigma\in S_n}\chi_{\lambda}(\sigma)\, U_{\sigma}\,.
\end{equation}
We also recall that in any unitary representation $R:S_{n}\rightarrow \mathcal{L}(\mathcal{H})$ of the symmetric group, the operators $\Pi_{\lambda}^{\mathcal{H}}\coloneqq \frac{\mathrm{dim}[\lambda]}{n!}\sum_{\sigma\in S_n}\chi_{\lambda}(\sigma)R(\sigma)$ either project onto the subspace $[\lambda]\otimes M_{\lambda}$ where $M_{\lambda}$ is the multiplicity space of the irreducible representation $[\lambda]$, if $[\lambda]$ is in the decomposition into irreducible of $R$, or are equal to zero otherwise.
Together with standard representation theory tools, we will also use the machinery of Weingarten calculus~\cite{collins_weingarten_2022,kostenberger_weingarten_2021,harrow_approximate_2023,mele_introduction_2024}. 
The following expression for the $\mathrm{U}(d)$ twirl of an operator holds~\cite{mele_introduction_2024}:
\begin{equation}\label{eq:haarint}
\EE{U} \left[U^{\otimes n}A {U^{\dagger}}{}^{\otimes n}\right]=\sum_{\sigma,\tau\in S_n}\mathrm{Wg}(\tau^{-1}\sigma,d)\Tr[U^{\dagger}_{\sigma}A]\, U_{\tau},
\end{equation}
where $\EE{U}$ denotes the expectation value over the Haar measure on $\mathrm{U}(d)$, and $\mathrm{Wg}(\,\cdot\,)$ is the \textit{Weingarten function}, which can be computed as~\cite{collins_integration_2006} 
\begin{equation}\label{eq:weingchar}
\mathrm{Wg}(\sigma, d)=\frac{1}{(n!)^2}\sum_{\lambda\vdash n,l(\lambda)\leq d}\frac{\mathrm{dim}^2[\lambda]}{\mathrm{dim}[\mathcal{U}_{d,\lambda}]}\, \chi_{\lambda}(\sigma)\,.
\end{equation}
We will also need the quantum Fourier transform for $S_n$. Let  $\widehat{S_n}$ be a Hilbert space of dimension $n!$, and let $\{\ket{\sigma}\}_{\sigma \in S_n}$  be a basis of $\widehat{S_n}$. This space hosts the commuting left and right regular representations $\rho_L, \rho_R$ of the symmetric group, acting as $\rho_L(\sigma)\rho_{R}(\sigma'){\ket{\tau}}=\ket{\sigma\tau(\sigma')^{-1}}$. It is well known that, as a representation space for $S_{n}\times S_n$, the Hilbert space $\widehat{S_n}$ decomposes as 

\begin{equation}
\widehat{S_n}=\bigoplus_{\lambda\vdash n}[\lambda]\otimes [\lambda]\,.
\end{equation}
Therefore, a basis for $\widehat{S_n}$ is also given by $\{\ket{\lambda,i,j}\}_{\lambda\vdash n,\, i,j\in [\mathrm{dim}[\lambda]]}$. The map $\mathrm{QFT}$ is the unitary map~\cite{beals_quantum_1997,moore_symmetric_2008,kawano_quantum_2016} such that:
\begin{equation}\label{eq:qft}
\mathrm{QFT}\,\ket{\sigma}=\sum_{\substack{\lambda\vdash n\\ i,j\in [\mathrm{dim}[\lambda]]} }\sqrt{\frac{\mathrm{dim}[\lambda]}{n!}}\, R_{\lambda}(\sigma)_{i,j}\ket{\lambda,i,j}.
\end{equation}
For any $\pi\in S_n$ we also define the unitary $\mathsf{C\pi}$, known as controlled permutation unitary, acting on $\widehat{S_n}\otimes \mathcal{H}^{\otimes n}$ as
\begin{equation}
\mathsf{C\pi}\,(\ket{\sigma}\otimes \ket{\psi})=\ket{\sigma}\otimes U_{\sigma}\ket{\psi}\,.
\end{equation}
Moreover, its inverse acts as
\begin{equation}
(\mathsf{C\pi})^\dagger\,(\ket{\sigma}\otimes \ket{\psi})=\ket{\sigma}\otimes U^{\dagger}_{\sigma}\ket{\psi}\,.
\end{equation}

\subsection{An explicit formula for the random Stinespring isometry}

In this subsection, we derive an explicit expression for the random Stinespring isometry in terms of the channel $\Phi$ and permutation unitaries, using the representation theoretic tools and the notation introduced above.

\begin{lemma}[(Explicit formula for random Stinespring isometry)]\label{lemma_explicit_form}
Let $\Omega$ denote the channel induced by the random Stinespring isometry appearing
on the right-hand side of Eq.~\eqref{eq_random_isometry}, namely
\begin{equation}\label{def_superchannel}
\Omega(\,\cdot\,)\coloneqq 
\EE{U_E}\left[ \,(\id_{B}\otimes U_{E})^{\otimes n}V^{\otimes n}\left(\,\cdot\,\right) V^{\dagger}{}^{\otimes n} (\id_{B}\otimes U_{E}^{\dagger}){}^{\otimes n}\right]\,,
\end{equation}
where we recall that the dimension of the Stinespring environment $E$ is $r$. Then, this channel can also be expressed as:
\begin{equation}\label{eq:weingresult}
\Omega(\,\cdot\,)=\sum_{\sigma\in S_n}\,\sum_{\lambda\vdash n,\, l(\lambda)\leq r}\frac{1}{n!}\frac{\mathrm{dim}[\lambda]}{\mathrm{dim}[\mathcal{U}_{r,\lambda}]}\left(  U_{\sigma}\Phi^{\otimes n}(U^{\dagger}_{\sigma}(\,\cdot\,))\right)\otimes \left(U_{\sigma}\Pi_{\lambda}\right)\,.
\end{equation}
\end{lemma}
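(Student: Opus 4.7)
My plan is to apply the $\mathrm{U}(r)$ Weingarten formula~\eqref{eq:haarint} to the Haar twirl over $U_E$, exploit the permutation covariance of $V^{\otimes n}$ to re-express the resulting reduced operator in terms of $\Phi^{\otimes n}$, and finally use the character expansion~\eqref{eq:weingchar} of $\mathrm{Wg}$ together with the identity~\eqref{eq:isoproj} for the isotypical projector to collapse a double sum over $S_n\times S_n$ into the single sum appearing in~\eqref{eq:weingresult}.

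First, I would note that the Haar integration in~\eqref{def_superchannel} acts on the $E^n$ factor only, while leaving $B^n$ untouched. Applying~\eqref{eq:haarint} with $d=r$ entrywise in the $B^n$ basis then yields
\begin{equation*}
\Omega(\,\cdot\,) \;=\; \sum_{\sigma,\tau\in S_n}\mathrm{Wg}(\tau^{-1}\sigma,r)\,\Tr_{E^n}\!\left[(\id_{B^n}\otimes U_\sigma^{\dagger\,E^n})\,V^{\otimes n}(\,\cdot\,)V^{\dagger\,\otimes n}\right]\otimes U_\tau^{E^n}.
\end{equation*}

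Next, and this is the crucial step, I would exploit the permutation covariance of the Stinespring dilation. Since the permutation unitary on $(BE)^n$ factorises as $U_\sigma^{(BE)^n}=U_\sigma^{B^n}\otimes U_\sigma^{E^n}$, and since $U_\sigma^{(BE)^n}V^{\otimes n}=V^{\otimes n}U_\sigma^{A^n}$, a short manipulation yields
\begin{equation*}
(\id_{B^n}\otimes U_\sigma^{\dagger\,E^n})\,V^{\otimes n}\;=\;(U_\sigma^{B^n}\otimes \id_{E^n})\,V^{\otimes n}\,U_\sigma^{\dagger\,A^n}.
\end{equation*}
Substituting this and performing $\Tr_{E^n}$---which converts $V^{\otimes n}(\cdot)V^{\dagger\,\otimes n}$ into $\Phi^{\otimes n}(\cdot)$---reduces the expression to
\begin{equation*}
\Omega(\,\cdot\,)\;=\;\sum_{\sigma,\tau\in S_n}\mathrm{Wg}(\tau^{-1}\sigma,r)\;U_\sigma^{B^n}\,\Phi^{\otimes n}\!\left(U_\sigma^{\dagger\,A^n}(\,\cdot\,)\right)\otimes U_\tau^{E^n}.
\end{equation*}

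To conclude, I would insert the character expansion~\eqref{eq:weingchar} of $\mathrm{Wg}$ and perform the $\tau$-sum. Using the change of variable $\tau=\sigma\pi^{-1}$, the reality of $S_n$ characters, and~\eqref{eq:isoproj}, one readily verifies
\begin{equation*}
\sum_{\tau\in S_n}\chi_\lambda(\tau^{-1}\sigma)\,U_\tau^{E^n}\;=\;\frac{n!}{\mathrm{dim}[\lambda]}\,U_\sigma^{E^n}\,\Pi_\lambda^{E^n}.
\end{equation*}
Plugging this back in and cancelling the combinatorial factors reproduces~\eqref{eq:weingresult}; the restriction $l(\lambda)\le r$ is inherited from~\eqref{eq:weingchar} (and is consistent with $\Pi_\lambda^{E^n}$ vanishing on $E^n$ when $l(\lambda)>r$). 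The conceptual heart of the argument is the permutation-covariance step: it is what couples the $B^n$ and $E^n$ tensor factors through a common index $\sigma$, making the subsequent collapse of the $(\sigma,\tau)$-double sum possible; the remaining work is a standard Weingarten calculation combined with symmetric-group character identities.
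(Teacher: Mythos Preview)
Your proposal is correct and follows essentially the same route as the paper's proof: apply the Weingarten formula~\eqref{eq:haarint} on the $E^n$ factor, use the permutation covariance $U_\sigma^{(BE)^n}V^{\otimes n}=V^{\otimes n}U_\sigma^{A^n}$ to pull the permutation through to $A^n$ and thereby recover $\Phi^{\otimes n}$ after the partial trace, then insert the character expansion~\eqref{eq:weingchar} and collapse the $\tau$-sum via a single change of variable and~\eqref{eq:isoproj}. The only cosmetic differences are that the paper restricts to pure inputs (and extends by linearity) and uses the substitution $\tau'=\sigma^{-1}\tau$ rather than your $\tau=\sigma\pi^{-1}$, which spares the explicit appeal to reality of characters; the argument is otherwise step-for-step the same.
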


\begin{proof}
Without loss of generality, it is sufficient to verify~\eqref{eq:weingresult} for pure states as inputs. Using~\eqref{eq:haarint}, we have
\begin{equation}\label{eq:firststep}
\Omega(\ketbra{\psi})=\sum_{\sigma,\tau\in S_n} \mathrm{Wg}(\sigma^{-1}\tau)\Tr_{E^n}[(\id_{B^n}\otimes U_{\sigma}^{\dagger})V^{\otimes n}\left(\ketbra{\psi}\right) V^{\dagger}{}^{\otimes n}]\otimes U_{\tau}.
\end{equation}
Now, we note that
\bb
(\id_{B^n}\otimes U_{\sigma}^{\dagger})V^{\otimes n}\ket{\psi}&=(U_{\sigma}\otimes \id_{E^n})(U_{\sigma}^{\dagger}\otimes U_{\sigma}^{\dagger})V^{\otimes n}\ket{\psi} \\
&=(U_{\sigma}\otimes \id_{E^n})V^{\otimes n}(U_{\sigma}^{\dagger}\ket{\psi}) \,,
\ee
where we used that $V_{A\to BE}^{\otimes n}(U_{\sigma}^{\dagger}\ket{\psi}_{A^n})=(U_{\sigma}^{\dagger}\otimes U_{\sigma}^{\dagger})V_{A\to BE}^{\otimes n}\ket{\psi}_{A^n}$. Inserting this into~\eqref{eq:firststep}, we have
\bb\label{eq:secondstep}
\Omega(\ketbra{\psi}) &= \sum_{\sigma,\tau\in S_n} \mathrm{Wg}(\sigma^{-1}\tau) \Tr_{E^n}\!\left[(U_{\sigma}\otimes \id_{E^n})V^{\otimes n}\left(U_{\sigma}^{\dagger}\ketbra{\psi}\right) V^{\dagger}{}^{\otimes n}\right]\otimes U_{\tau}\\
&= \sum_{\sigma,\tau\in S_n} \mathrm{Wg}(\sigma^{-1}\tau) \left( U_{\sigma}\Phi^{\otimes n}(U_{\sigma}^{\dagger}\ketbra{\psi})\right)\otimes U_{\tau}\\
&\eqt{(i)} \sum_{\sigma,\tau\in S_n}\frac{1}{(n!)^2}\sum_{\lambda\vdash n,\, l(\lambda)\leq r}\frac{\mathrm{dim}^2[\lambda]}{\mathrm{dim}[\mathcal{U}_{r,\lambda}]}\, \chi_{\lambda}(\sigma^{-1}\tau) \left(U_{\sigma}\Phi^{\otimes n}(U_{\sigma}^{\dagger}\ketbra{\psi})\right)\otimes U_{\tau}\\
&\eqt{(ii)} \sum_{\sigma,\tau'\in S_n}\frac{1}{(n!)^2}\sum_{\lambda\vdash n,\, l(\lambda)\leq r}\frac{\mathrm{dim}^2[\lambda]}{\mathrm{dim}[\mathcal{U}_{r,\lambda}]}\,\chi_{\lambda}(\tau') \left(U_{\sigma}\Phi^{\otimes n}(U_{\sigma}^{\dagger}\ketbra{\psi})\right)\otimes \left(U_{\sigma}U_{\tau'}\right)\\
&\eqt{(iii)} \sum_{\sigma\in S_n}\frac{1}{n!}\sum_{\lambda\vdash n,\, l(\lambda)\leq r}\frac{\mathrm{dim}[\lambda]}{\mathrm{dim}[\mathcal{U}_{r,\lambda}]}\left( U_{\sigma}\Phi^{\otimes n}(U_{\sigma}^{\dagger}\ketbra{\psi})\right)\otimes  \left(U_{\sigma}\Pi_{\lambda}\right)\,.
\ee
where in (i) we used~\eqref{eq:weingchar}, in (ii) we have changed variable $\tau \to \tau'=\sigma^{-1}\tau$, which is a one-to-one mapping, and in (iii) we used~\eqref{eq:isoproj}.
\end{proof}

\subsection{An explicit circuit for the random Stinespring superchannel}
In this subsection, we present an explicit quantum circuit that implements the random Stinespring superchannel. A schematic representation of the circuit is
shown in Fig.~\ref{fig:figurecircuit}. We begin by introducing the individual
components that make up the circuit.

Let $\mathsf{E}:\mathcal{H}_A^{\otimes n}\to \widehat{S_n}\otimes
\mathcal{H}_A^{\otimes n}$ denote the (encoding) isometry defined by its action on
any state $\ket{\psi}\in\mathcal{H}_A^{\otimes n}$ as
\begin{equation}
\mathsf{E}\ket{\psi}
\coloneqq
\mathsf{C\pi}
\left[
\left(\frac{1}{\sqrt{n!}}\sum_{\sigma\in S_n}\ket{\sigma}\right)
\otimes
\ket{\psi}
\right],
\end{equation}
and let $\mathcal{E}(\,\cdot\,)\coloneqq \mathsf{E}(\cdot)\mathsf{E}^\dagger$ be
the corresponding isometry channel. Moreover, it is known that the uniform superposition over permutations can be efficiently prepared
as 
\bb
    \frac{1}{\sqrt{n!}}\sum_{\sigma\in S_n}\ket{\sigma}= \mathrm{QFT}^{\dagger}\ket{(n,0,\ldots,0),1,1}\,,
\ee
where the input state is
expressed in the basis
$\{\ket{\lambda,i,j}\}_{\lambda\vdash n,\; i,j\in[\mathrm{dim}[\lambda]]}$ of
$\widehat{S_n}$. Next, let
$\mathsf{D}:\widehat{S_n}\otimes \mathcal{H}_B^{\otimes n}\to
\widehat{S_n}\otimes \mathcal{H}_B^{\otimes n}$
denote the (decoding) unitary acting on any
$\ket{\phi}\in \widehat{S_n}\otimes \mathcal{H}_B^{\otimes n}$ as
\begin{equation}
\mathsf{D}\ket{\phi}
\coloneqq
(\mathrm{QFT}\otimes \id_{B^n})\, \mathsf{C\pi}^\dagger \ket{\phi}\,,
\end{equation}
and let $\mathcal{D}(\,\cdot\,)\coloneqq \mathsf{D}(\,\cdot\,)\mathsf{D}^\dagger$
be the associated (decoding) channel. Finally, we define a quantum channel
$\mathcal{T}:\mathcal{L}(\widehat{S_n})\to \mathcal{L}(\mathcal{H}_E^{\otimes n})$
by its action on the basis operators as

\begin{equation}\label{eq_tau}
\mathcal{T}(\ketbraa{\lambda,i,j}{\lambda',k,l})
\coloneqq
\delta_{\lambda,\lambda'}\delta_{i,k}\,
\ketbraa{\lambda,j}{\lambda,l}
\otimes
\frac{\id_{\mathcal{U}_{r,\lambda}}}{\mathrm{dim}[\mathcal{U}_{r,\lambda}]},
\end{equation}
if $l(\lambda)\leq r$, and as the replacer with the maximally mixed state over $E^n$ otherwise. Note that the channel $\mathcal{T}$ acts as a measurement of the index $\lambda$, and as a $\lambda$-dependent replacer channel on the subsystem hosting $i,k$: the overall action is depicted as $\mathcal{P}_{\lambda,r}$ in Figure~\ref{fig:figurecircuit}, and it consists in preparing the state $\ketbra{\lambda}\otimes \frac{\id_{\mathcal{U}_{r,\lambda}}}{\mathrm{dim}[\mathcal{U}_{r,\lambda}]}$ upon recording the outcome $\lambda$.

\begin{figure}[t]
  \centering
  \def\svgwidth{\linewidth}
  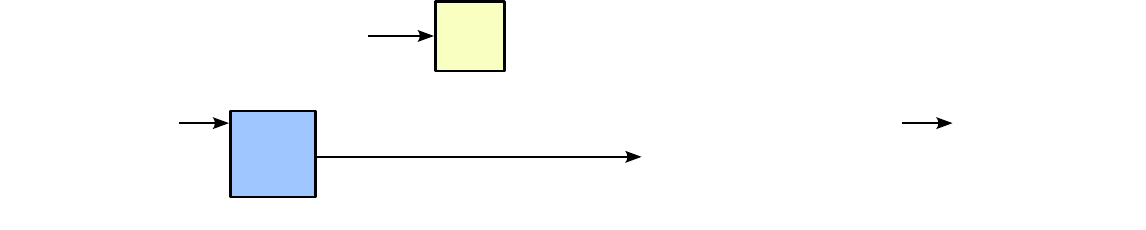
  \caption{Circuit implementation of the random Stinespring superchannel from
  Theorem~\ref{th:circuit}.}
  \label{fig:figurecircuit}
\end{figure}

With this notation in place, we are now ready to state and prove the main result of this subsection.
\begin{thm}[(Circuit implementing the random Stinespring superchannel)]\label{circuit_stin}
\label{th:circuit}
The quantum channel
\begin{equation}
\mathcal{C}_{A^n\to B^n E^n}
\coloneqq
(\mathcal{U}_{\mathrm{Schur}}\otimes \mathrm{Id}_{B^n})
\circ
\big(\mathcal{T}_{ \widehat{S_n}\to E^n }\otimes \mathrm{Id}_{B^n}\big)
\circ
\mathcal{D}
\circ
\big(\mathrm{Id}_{\widehat{S_n}}\otimes \Phi_{A\to B}^{\otimes n}\big)
\circ
\mathcal{E}_{A^n\to \widehat{S_n} \,A^n},
\end{equation}
which corresponds to the circuit depicted in Fig.~\ref{fig:figurecircuit}, is
exactly equal to the channel $\Omega_{A^n\to B^n E^n}$  induced by the random Stinespring isometry
as defined in Eq.~\eqref{def_superchannel}, namely
\bb
\mathcal{C}_{A^n\to B^n E^n}(\,\cdot\,)
=
\Omega_{A^n\to B^n E^n}(\,\cdot\,)\,,
\ee
where the equality is understood with the appropriate identification of the
subsystems in the tensor product.\footnote{In contrast to Lemma~\ref{lemma_explicit_form}, throughout this section we adopt,
for ease of presentation, the convention of writing the system $B^n$ on the
right-hand side of tensor products, while placing the auxiliary register
$\widehat{S_n}$ and the dilation environment $E^n$ on the left. This choice
simplifies the notation for controlled operations, for which it is natural to
display the control system on the left.}
\end{thm}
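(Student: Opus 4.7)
The plan is to trace the action of the circuit step by step on a pure input $\ket{\psi}\in\mathcal{H}_A^{\otimes n}$ (by linearity this reduces the general case to this one) and match the resulting output with the closed-form expression for $\Omega$ provided by Lemma~\ref{lemma_explicit_form}.

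Applying the encoder yields $\mathsf{E}\ket{\psi}=\frac{1}{\sqrt{n!}}\sum_\sigma\ket{\sigma}\otimes U_\sigma\ket{\psi}$, and after $\Phi^{\otimes n}$ on the $A^n$-register one obtains
$$\frac{1}{n!}\sum_{\sigma,\tau\in S_n}\ketbraa{\sigma}{\tau}\otimes\Phi^{\otimes n}\!\left(U_\sigma\ketbra{\psi}U_\tau^{\dagger}\right)$$
on $\widehat{S_n}\otimes\mathcal{H}_B^{\otimes n}$. Conjugation by $\mathsf{C\pi}^{\dagger}$ then multiplies the $B^n$-factor from the left by $U_\sigma^{B\dagger}$ and from the right by $U_\tau^{B}$, while $\mathrm{QFT}$ expresses $\ketbraa{\sigma}{\tau}$ in the Young--Yamanouchi basis through the real matrix coefficients $R_\lambda(\sigma)_{ij}R_{\lambda'}(\tau)_{kl}$. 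The channel $\mathcal{T}$ enforces $\lambda=\lambda'$ and $i=k$; using the elementary identity $\sum_i R_\lambda(\sigma)_{ij}R_\lambda(\tau)_{il}=R_\lambda(\sigma^{-1}\tau)_{jl}$ (which holds because each $R_\lambda$ is real-orthogonal in this basis), and recognising that the operator $\sum_{j,l}R_\lambda(\pi)_{jl}\ketbraa{\lambda,j}{\lambda,l}\otimes\id_{\mathcal{U}_{r,\lambda}}$ coincides, after the Schur transform $\mathcal{U}_{\mathrm{Schur}}$, with $U_\pi^{E}\Pi_\lambda^{E}$ on the $\lambda$-isotypic subspace of $\mathcal{H}_E^{\otimes n}$, the branch $l(\lambda)\leq r$ of the circuit produces
$$\frac{1}{(n!)^2}\sum_{\sigma,\tau}\sum_{\lambda:\, l(\lambda)\leq r}\frac{\dim[\lambda]}{\dim[\mathcal{U}_{r,\lambda}]}\,U_{\sigma^{-1}\tau}^{E}\Pi_\lambda^{E}\otimes U_\sigma^{B\dagger}\Phi^{\otimes n}\!\left(U_\sigma^{A}\ketbra{\psi}U_\tau^{A\dagger}\right)U_\tau^{B}\,.$$
A change of summation variable $\tau=\sigma\pi$, together with the $S_n$-covariance $\Phi^{\otimes n}(U_\sigma^{A}X\,U_\sigma^{A\dagger})=U_\sigma^{B}\Phi^{\otimes n}(X)U_\sigma^{B\dagger}$, collapses the dependence on $\sigma$ and reproduces the right-hand side of Lemma~\ref{lemma_explicit_form}, modulo the swap of the $B^n$ and $E^n$ tensor factors noted in the footnote.

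The only delicate step is to verify that the ``replacer'' branch of $\mathcal{T}$, active on partitions with $l(\lambda)>r$, contributes nothing to the final state. After the same reparametrisation it generates, for each such $\lambda$, a term proportional to $\sum_\pi\chi_\lambda(\pi)\,U_\pi^{B}\Phi^{\otimes n}(U_\pi^{A\dagger}\ketbra{\psi})\otimes\id_{E^n}/r^n$. Inserting the Stinespring form $\Phi^{\otimes n}(X)=\Tr_{E^n}[V^{\otimes n}XV^{\dagger\otimes n}]$ and using the intertwining relation $V^{\otimes n}U_\pi^{A\dagger}=(U_\pi^{B\dagger}\otimes U_\pi^{E\dagger})V^{\otimes n}$, the $B^n$-sum rewrites as $\Tr_{E^n}\!\big[\big(\id_{B^n}\otimes\sum_\pi\chi_\lambda(\pi)U_\pi^{E\dagger}\big)V^{\otimes n}\ketbra{\psi}V^{\dagger\otimes n}\big]$. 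By~\eqref{eq:isoproj} the bracketed operator on $E^n$ equals $(n!/\dim[\lambda])\,\Pi_\lambda^{E}$, which vanishes identically on $\mathcal{H}_E^{\otimes n}$ whenever $l(\lambda)>r$ by Schur--Weyl duality. Hence the replacer branch contributes zero, and the total circuit output equals $\Omega(\ketbra{\psi})$, completing the proof.
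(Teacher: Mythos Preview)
Your proof is correct and follows the same overall strategy as the paper: compute the circuit output on a pure input step by step, use the real-orthogonality identity $\sum_i R_\lambda(\sigma)_{ij}R_\lambda(\tau)_{il}=R_\lambda(\sigma^{-1}\tau)_{jl}$, identify the Schur-transformed operator with $U_\pi\Pi_\lambda$, change variables $\tau=\sigma\pi$, and match against Lemma~\ref{lemma_explicit_form}.

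The one place where your route differs from the paper is in the handling of the $l(\lambda)>r$ terms. The paper eliminates these \emph{before} applying $\mathcal{T}$: it rewrites the $B^n$-factor via the Stinespring dilation, inserts the resolution of the identity $\id_{E^n}=\sum_{\lambda':\,l(\lambda')\le r}\Pi_{\lambda'}$ on the environment, shifts the resulting permutation into the $R_\lambda$ matrix coefficient by group-invariance, and then recognises the isotypical projector $\Pi_{\lambda'}^{\widehat{S}_n}$ acting on $\ket{\lambda,j}$, forcing $\lambda=\lambda'$ and hence $l(\lambda)\le r$. You instead apply $\mathcal{T}$ first and show directly that the replacer branch contributes zero, by tracing the $j,l$ indices into a character and observing that $\sum_\pi\chi_\lambda(\pi)U_\pi^{E\dagger}\propto\Pi_\lambda^{E}=0$ on $\mathcal{H}_E^{\otimes n}$ whenever $l(\lambda)>r$. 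Both arguments rest on the same Schur--Weyl fact, but yours is more direct and avoids the somewhat roundabout insertion-and-shift manoeuvre; the paper's version, on the other hand, never has to spell out exactly how the replacer branch of $\mathcal{T}$ acts on basis off-diagonals.
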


\begin{proof}
It suffices to prove that
\begin{equation}
\mathcal{C}(\ketbra{\psi})=\Omega(\ketbra{\psi})\,\qquad\forall \ket{\psi}\in \mathcal{H}_{A}^{\otimes n}\,.
\end{equation} 
Let us first compute the left-hand side:
\begin{align}
\mathcal{E}(\ketbra{\psi})=\frac{1}{n!}\sum_{\sigma,\tau\in S_n}\ketbraa{\sigma}{\tau}\otimes U_{\sigma}\ketbra{\psi}U^{\dagger}_{\tau}\,.
\end{align}
Then, applying $\Phi^{\otimes n}$ on the register $A^n$, we obtain
\bb
\big(\mathrm{Id}_{\widehat{S_n}}\otimes\Phi^{\otimes n}\big) \circ \mathcal{E}(\ketbra{\psi})&=\frac{1}{n!}\sum_{\sigma,\tau\in S_n}\ketbraa{\sigma}{\tau}\otimes \Phi^{\otimes n}(U_{\sigma}\ketbra{\psi}U^{\dagger}_{\tau})\\
&=\frac{1}{n!}\sum_{\sigma,\tau\in S_n}\ketbraa{\sigma}{\tau}\otimes U_{\tau}\Phi^{\otimes n}(U_{\tau^{-1}\sigma}\ketbra{\psi})U^{\dagger}_{\tau}\,\,,
\ee
where in the last line we exploited that $\Phi^{\otimes n} (\,\cdot\,)=U_\tau\Phi^{\otimes n}(\,U_\tau^\dagger (\,\cdot\,)U_\tau)U_\tau^\dagger$.
Going forward, let us now apply $\mathcal{D}$:
\bb
&\mathcal{D}\circ \big(\mathrm{Id}_{\widehat{S_n}}\otimes \Phi^{\otimes n}\big)\circ \mathcal{E}(\ketbra{\psi})\\
&\qquad =\frac{1}{n!}\sum_{\sigma,\tau\in S_n}\sum_{\lambda,\lambda',i,j,k,l}\frac{\sqrt{\mathrm{dim}[\lambda]\mathrm{dim}[\lambda']}}{n!}\!\! && R_{\lambda}(\sigma)_{i,j}\ketbraa{\lambda,i,j}{\lambda',k,l}R_{\lambda'}(\tau)_{k,l} \,\otimes \\[-0.6em]
& &&  \quad\otimes U_{\sigma^{-1}\tau}\Phi^{\otimes n}(U_{\tau^{-1}\sigma}\ketbra{\psi})\,,
\ee
where we used~\eqref{eq:qft} and the fact that representation matrices $R_{\lambda}(\sigma)$ are real-valued.

Now, we would like to apply the channel $\mathcal{T}$ to the auxiliary register $\widehat{S_n}$.
Recall that $\mathcal{T}$, defined in Eq.~\eqref{eq_tau}, acts differently
depending on whether the associated Young diagram $\lambda$ satisfies
$l(\lambda)\le r$ or not. In order to simplify the analysis, we first observe that only
the terms with $l(\lambda)\le r$ give a nonzero contribution. Indeed, note that
\bb
\sum_{\sigma,\tau\in S_n}&\sum_{\lambda,i,j,l}\frac{\mathrm{dim}[\lambda]}{n!^2}R_{\lambda}(\sigma)_{i,j}R_{\lambda}(\tau)_{i,l}\ketbraa{\lambda,j}{\lambda,l}\otimes U_{\sigma^{-1}\tau}\Phi^{\otimes n}(U_{\tau^{-1}\sigma}\ketbra{\psi})\\
&\eqt{(v)}\sum_{\sigma,\tau\in S_n}\sum_{\lambda,i,j,l}\frac{\mathrm{dim}[\lambda]}{n!^2}R_{\lambda}(\sigma^{-1})_{j,i}R_{\lambda}(\tau)_{i,l}\ketbraa{\lambda,j}{\lambda,l}\otimes U_{\sigma^{-1}\tau}\Phi^{\otimes n}(U_{\tau^{-1}\sigma}\ketbra{\psi})\\
&\eqt{(vi)}\sum_{\sigma\in S_n}\sum_{\lambda,j,l}\frac{\mathrm{dim}[\lambda]}{n!}R_{\lambda}(\sigma)_{j,l}\ketbraa{\lambda,j}{\lambda,l}\otimes U_{\sigma}\Phi^{\otimes n}(U_{\sigma^{-1}}\ketbra{\psi})\\
&\eqt{(vii)}\sum_{\sigma\in S_n}\sum_{\lambda,j,l}\frac{\mathrm{dim}[\lambda]}{n!}R_{\lambda}(\sigma)_{j,l}\ketbraa{\lambda,j}{\lambda,l}\otimes U_{\sigma}\Tr_{E^n}[V^{\otimes n}(U_{\sigma^{-1}}\ketbra{\psi})V^{\dagger}{}^{\otimes n}]\\
&\eqt{(viii)}\sum_{\sigma\in S_n}\sum_{\lambda,j,l}\frac{\mathrm{dim}[\lambda]}{n!}R_{\lambda}(\sigma)_{j,l}\ketbraa{\lambda,j}{\lambda,l}\otimes \Tr_{E^n}[(\id_{B^n}\otimes U_{\sigma^{-1}}) V^{\otimes n}(\ketbra{\psi})V^{\dagger}{}^{\otimes n}],
\ee
where in (v) we used that the $R_{\lambda}(\sigma)^{\dagger}=R_{\lambda}(\sigma)^{\intercal}=R_{\lambda}(\sigma^{-1})$ are real, and in (vi) we contracted the two representation matrices $R_{\lambda}$ and changed variable, in (vii) we wrote $\Phi$ in terms of its dilation, and in (viii) we used the permutation covariance of $V^{\otimes n}$. By using that the dimension of the Stinespring environment $E$ is $r$, we can now insert the resolution of the identity in the $E^n$ space 
\bb
    \id_{E^n}=\sum_{\lambda'\vdash n, l(\lambda')\leq r}\Pi_{\lambda'}=\sum_{\lambda'\vdash n, l(\lambda')\leq r}\frac{\mathrm{dim}[\lambda']}{n!}\sum_{\pi\in S_n}\chi_{\lambda'}(\pi)U_{\pi}\,,
\ee
and use the invariance of the measure on the group to obtain
\bb
&\sum_{\sigma\in S_n}\sum_{\lambda,j,l}R_{\lambda}(\sigma)_{j,l}\ketbraa{\lambda,j}{\lambda,l}\otimes \Tr_{E^n}[(\id_{B^n}\otimes U_{\sigma^{-1}}) V^{\otimes n}(\ketbra{\psi})V^{\dagger}{}^{\otimes n}]\\
&\qquad \eqt{(ix)}\sum_{\lambda'\vdash n,\, l(\lambda')\leq r}\sum_{\sigma,\pi\in S_n}\sum_{\lambda,j,l}\frac{\mathrm{dim}[\lambda']}{n!}\chi_{\lambda'}(\pi)R_{\lambda}(\sigma)_{j,l}\ketbraa{\lambda,j}{\lambda,l}\\&\qquad\otimes \Tr_{E^n}[(\id_{B^n}\otimes U_{\sigma^{-1}}U_{\pi}) V^{\otimes n}(\ketbra{\psi})V^{\dagger}{}^{\otimes n}]\\
&\qquad \eqt{(x)}\sum_{\lambda'\vdash n,\, l(\lambda')\leq r}\sum_{\sigma,\pi\in S_n}\sum_{\lambda,j,l}\frac{\mathrm{dim}[\lambda']}{n!}\chi_{\lambda'}(\pi)R_{\lambda}(\sigma\pi)_{j,l}\ketbraa{\lambda,j}{\lambda,l}\\&\qquad\otimes \Tr_{E^n}[(\id_{B^n}\otimes U_{\sigma^{-1}}) V^{\otimes n}(\ketbra{\psi})V^{\dagger}{}^{\otimes n}]\\
&\qquad \eqt{(xi)}\sum_{\lambda'\vdash n,\, l(\lambda')\leq r}\sum_{\sigma\in S_n}\sum_{\lambda,j,l}R_{\lambda}(\sigma)_{j,l}\Pi_{\lambda'}^{\widehat{S}_n}\ketbraa{\lambda,j}{\lambda,l}\otimes \Tr_{E^n}[(\id_{B^n}\otimes U_{\sigma^{-1}}) V^{\otimes n}(\ketbra{\psi})V^{\dagger}{}^{\otimes n}]\\
&\qquad \eqt{(xii)}\sum_{\lambda\vdash n,\, l(\lambda)\leq r}\sum_{\sigma\in S_n}\sum_{j,l}R_{\lambda}(\sigma)_{j,l}\ketbraa{\lambda,j}{\lambda,l}\otimes \Tr_{E^n}[(\id_{B^n}\otimes U_{\sigma^{-1}}) V^{\otimes n}(\ketbra{\psi})V^{\dagger}{}^{\otimes n}]\\
&\qquad=\sum_{\lambda\vdash n,\, l(\lambda)\leq r}\sum_{\sigma\in S_n}\sum_{j,l}R_{\lambda}(\sigma)_{j,l}\ketbraa{\lambda,j}{\lambda,l}\otimes U_{\sigma}\Phi^{\otimes n}(U_{\sigma^{-1}}\ketbra{\psi})\,,
\ee
where in~(ix) we inserted the resolution of the identity in terms of $\Pi_{\lambda}$ and their expression as linear combinations of permutations, in~(x) we used the invariance of the measure on the group, in~(xi) we recollected the isotypical projector $\Pi_{\lambda'}^{\widehat{S}_n}$ in the representation space $\bigoplus_{\lambda \vdash n}[\lambda]$ and in~(xii) we used that $\Pi_{\lambda'}^{\widehat{S}_n}\ket{\lambda,i}=\delta_{\lambda,\lambda'}\ket{\lambda,i}$. This means that when $V$ is defined with an environment of dimension at most $r$, we can restrict the sum over Young diagrams with length at most $r$.

Then, applying $\mathcal{T}$ to the register $\widehat{S_n}$, we obtain:
\bb
&(\mathcal{T}\otimes \mathrm{Id}_{B^n})\circ \mathcal{D} \circ \big(\mathrm{Id}_{\widehat{S_n}}\otimes \Phi^{\otimes n}\big) \circ \mathcal{E}(\ketbra{\psi}) \\
&\qquad=\sum_{\sigma\in S_n}\sum_{\lambda\vdash n,\, l(\lambda)\leq r}\frac{\mathrm{dim}[\lambda]}{n!\mathrm{dim}[\mathcal{U}_{r,\lambda}]}\sum_{j,l}R_{\lambda}(\sigma)_{j,l}\ketbraa{\lambda,j}{\lambda,l}\otimes\id_{\mathcal{U}_{r,\lambda}}  \otimes U_{\sigma}\Phi^{\otimes n}(U_{\sigma}^{\dagger}\ketbra{\psi}) 
\ee
As a consequence, we conclude that
\bb
\mathcal{C}(\ketbra{\psi})&=
\sum_{\sigma\in S_n}\sum_{\lambda\vdash n,\, l(\lambda)\leq r}\frac{\mathrm{dim}[\lambda]}{n!\mathrm{dim}[\mathcal{U}_{r,\lambda}]}\, \Pi_{\lambda}U_{\sigma}\otimes U_{\sigma}\Phi^{\otimes n}(U_{\sigma}^{\dagger}\ketbra{\psi})\,,
\ee
where we used the Schur--Weyl duality in~\eqref{eq:SchurWeyl} to write that 
\bb
U_{\mathrm{Schur}}^{\dagger}\Pi_{\lambda}U_{\sigma}U_{\mathrm{Schur}}= \sum_{j,l}R_{\lambda}(\sigma)_{j,l}\ketbraa{\lambda,j}{\lambda,l} \otimes \id_{\mathcal{U}_{r,\lambda}}\,.
\ee 
Comparing with~\eqref{eq:weingresult} (with the appropriate identification of the
subsystems in the tensor product), we obtain the claim.
\end{proof}

\begin{rem}[(Efficiency of the circuit)]\label{rema}
The circuit described in Theorem~\ref{th:circuit} has depth $O({\rm poly}(n,\log d,\log\frac{1}{\eta}))$, with $\eta$ being the diamond norm error due to finite gate set approximations. The circuit of $\mathrm{QFT}$ can be implemented in $\mathrm{poly}(n)$ time (see~\cite{beals_quantum_1997,moore_symmetric_2008} and a refined analysis in~\cite{kawano_quantum_2016}). In the implementation of~\cite{kawano_quantum_2016} the permutations are arranged in the memory through their canonical encoding $\sigma=(c_{1,...,n})^{i_{n-1}}(c_{1,...,n-1})^{i_{n-2}}\ldots (c_{1,2})^{i_1}$, where $c_{1,..,k}$ is the cycle on $\{1,...,k\}$ and $i_{k}\in\{0,\ldots,k\}$ for any $k\in \{2,\ldots,n\}$, so that $i_1$ can be stored in a qubit, $i_2$ in a qutrit, and so on until $i_{n-1}$, which is stored in an $n$-dimensional space. By applying cycles controlled by each of these registers in sequence, one can implement $\mathsf{C\pi}$ in time $O(\mathrm{poly}(n,\log d))$.  The channel $\mathcal{T}$ is just a partial trace composed with a preparation of a maximally mixed state: this preparation is depicted as $\mathcal{P}_{\lambda,r}$ in Figure~\ref{fig:figurecircuit} and it consists in preparing the state $\ketbra{\lambda}\otimes \frac{\id_{\mathcal{U}_{r,\lambda}}}{\mathrm{dim}[\mathcal{U}_{r,\lambda}]}$. Finally, the last step is a Schur transform: circuits with complexity polynomial in the number of copies and dimension were first proposed in~\cite{bacon_efficient_2006}, and Harrow~\cite{harrow_applications_2005} sketched a method in his thesis to lower the dimension dependence to $O(\log d)$. A detailed proposal to achieve this was presented in~\cite{krovi_efficient_2019}, which was recently found to contain a mistake~\cite{Fei2024QuantumAlgorithm}. A corrected version of this proposal and an improved version of the original algorithm~\cite{bacon_efficient_2006} have been established by~\cite{burchardt_high-dimensional_2025}, confirming the $O(\mathrm{poly}(n,\log d))$ complexity. As a final note, the preparation $\ketbra{\lambda}\otimes \frac{\id_{\mathcal{U}_{r,\lambda}}}{\mathrm{dim}[\mathcal{U}_{r,\lambda}]}$ can of course be done classically, requiring sampling semi-standard Young tableaux of shape $\lambda$ uniformly and then preparing the corresponding basis state. However, this operation can be costly in terms of classical computation\footnote{We thank Elias Theil for noticing this issue.}. For a dilation dimension $r=2^k$, $k$ integer, a shortcut is to prepare the mixed state approximately (and compatibly with the diamond norm error already accounted for by the approximations in the Schur transform), via the following steps: 
\begin{enumerate}[(i)]
    \item Prepare the state $\ket{\lambda}\otimes\ket{S_0}\otimes \ket{T_0}$, where $\ket{S_0}$ is some Gelfand-Tsetlin (GT) pattern of shape $\lambda$, and $\ket{T_0}$ a valid Young-Yamanouchi basis state of shape $\lambda$ (this can be done efficiently: valid patterns can be computed and prepared in $O(\mathrm{poly}(n,\log d))$, see the encodings in~\cite{burchardt_high-dimensional_2025});
    \item Apply inverse Schur transform;
    \item apply $n$ copies of a random circuit approximating an $n$-design with error $\epsilon$ in diamond-norm with $O(\log (k/\epsilon)·n\mathrm{poly}\log(n))$ depth~\cite{Schuster2025};
    \item apply the Schur transform;
    \item Discard the permutation register.
\end{enumerate}
This procedure works because using the Haar measure instead of the approximate design one would obtain the $U(r)$  twirling of a GT pattern, which prepares the maximally mixed state in the irrep $\lambda$, and such twirling involves the $n$-th moment of the Haar measure.
\end{rem}

\begin{rem}[(Intuition behind the circuit)]
The reader may wonder how we came up with the circuit, and whether there is some intuition behind it. The process involved some trial and error to reproduce the desired supermap for $n=1$ and $n=2$ using controlled permutations, and the general ansatz was found by observing that the QFT method for weak Schur sampling based on the QFT in Chapter 8 of Harrow's thesis~\cite{harrow_applications_2005} was, in fact, implementing the main step of the random purification channel of~\cite{tang2025}. A more systematic, representation-theoretic derivation of the random Stinespring superchannel will be presented in a future version of this manuscript.
\end{rem}

We are now ready to prove our main result, stated in Theorem~\ref{thm1}.
\begin{proof}[Proof of Theorem~\ref{thm1}]
The existence of encoding maps, a memory system, and decoding maps satisfying~\eqref{eq_random_isometry} follows directly from Theorem~\ref{circuit_stin}. Furthermore, the efficiency of the circuit implementing the random Stinespring superchannel is established by Remark~\ref{rema}.
\end{proof}

\section{Applications to quantum Shannon theory}\label{appl_Shannon}


Throughout this section, we present an application of the random Stinespring superchannel to quantum Shannon theory, namely, the extension to quantum channels of Uhlmann-type theorems that are currently known only at the level of quantum states~\cite{NC,Mazzola_2025,Fang2025-variational,random_pur_simple}.

The celebrated Uhlmann theorem for fidelity is a fundamental result in quantum information theory. It states that the fidelity between two quantum states can be expressed as the maximum fidelity between their purifications~\cite{NC}. This theorem has been extended to more general quantum divergences~\cite{Mazzola_2025,Fang2025-variational}, and simpler proofs of these extensions have recently been obtained using the random purification channel~\cite{random_pur_simple}. Here, we use the random Stinespring superchannel to extend the Uhlmann theorem for quantum divergences (Theorem~\ref{thm:Uhlmann})~\cite{Mazzola_2025,Fang2025-variational,random_pur_simple} to the setting of quantum channels, thereby obtaining Theorem~\ref{thm:Uhlmann2}.


We recall that a function $\mathbb{D}:\mathcal{D}(\mathcal{H})\times\mathcal{D}(\mathcal{H})\rightarrow \mathbb{R}\cup\{+\infty\}$ is called \emph{divergence} if it satisfies the \emph{data-processing inequality}: for every quantum channel $\Lambda$ and every pair of states $(\rho,\sigma)$, we have
\bb
\mathbb{D}\big(\Lambda(\rho)\big\|\Lambda(\sigma)\big)\leq \mathbb{D}(\rho\|\sigma).
\ee
A divergence is \textit{jointly convex} if for any pair of ensembles of states $\{(p_i,\rho_i)\}_i$, $\{(p_i,\sigma_i)\}_i$ we have
\bb
\Rel{\mathbb{D}}{\sumno_{i}p_i\rho_i}{\sumno_{i}p_i\sigma_i} \leq \sum_{i}p_i \mathbb{D}(\rho_i\|\sigma_i).
\ee
Joint convexity is actually a consequence of the data-processing inequality whenever 
\bb
\Rel{\mathbb{D}}{\sumno_{i}p_i\ketbra{i}\otimes\rho_i}{\sumno_{i}p_i\ketbra{i}\otimes\sigma_i} = \sum_{i}p_i \mathbb{D}(\rho_i\|\sigma_i),
\ee
which holds for most divergences of interest.
Given any arbitrary divergence $\mathbb{D}$ between states, we can define a corresponding notion of divergence between channels: given quantum channels $\pazocal{M}$ and $\pazocal{N}$ with input system $A$ and output system $B$, we set
\bb \label{eq:rel_ent_ch}
\mathbb{D}\big(\pazocal{M}\,\big\|\,\pazocal{N}\big)\coloneqq \sup_{\rho_{RA}}\mathbb{D}\big(({\rm Id}_R\otimes \pazocal{M}_{A\to B})(\rho_{RA})\,\big\|\,({\rm Id}_R\otimes \pazocal{N}_{A\to B})(\rho_{RA})\big),
\ee
where the maximum is taken over all possible auxiliary systems $R$ and states $\rho_{RA}\in\mathcal{D}(\mathcal{H}_{RA})$.
One of the most relevant examples of divergence between states is the \emph{Umegaki relative entropy}~\cite{Umegaki1962}, defined as
\begin{equation} \label{eq:umegaki}
    D(\rho\|\sigma)
    \coloneqq \Tr\!\left[\rho\bigl(\log\rho - \log\sigma\bigr)\right],
\end{equation}
and which can be lifted to channels as
\bb
D\big(\pazocal{M}\,\big\|\,\pazocal{N}\big)\coloneqq \sup_{\rho_{RA}}D\big(({\rm Id}_R\otimes \pazocal{M}_{A\to B})(\rho_{RA})\,\big\|\,({\rm Id}_R\otimes \pazocal{N}_{A\to B})(\rho_{RA})\big),
\ee
Note that, since a divergence between states satisfies the data-processing inequality, also the corresponding version for channels satisfies an analogous data-processing inequality:
\bb
    D\big(\Lambda\circ \pazocal{M}\,\big\|\,\Lambda\circ\pazocal{N}\big)\leq 
    D\big(\pazocal{M}\,\big\|\,\pazocal{N}\big).
\ee
A divergence $\mathbb{D}$ between states is said to be \emph{additive} if, given two arbitrary Hilbert spaces $\mathcal{H}_1$ and $\mathcal{H}_2$, we have
\bb
    \mathbb{D}(\rho_1\otimes\rho_2\|\sigma_1\otimes\sigma_2)=\mathbb{D}(\rho_1\|\sigma_1)+\mathbb{D}(\rho_2\|\sigma_2)
\ee
for all states $\rho_1,\sigma_1\in\mathcal{H}_1$ and $\rho_2,\sigma_2\in\mathcal{H}_2$. This is clearly the case for the Umegaki relative entropy, due to the additivity of the matrix logarithm under tensor products. However, not all the divergences are additive (for example, the measured relative entropy is not~\cite{Donald1986}). In that case, we can also introduce the notion of \emph{regularisation} of $\mathbb D$, by setting 
\bb
\mathbb{D}^\infty(\rho\|\sigma)&\coloneqq \liminf_{n\to\infty}\frac{1}{n}\,\mathbb{D}(\rho^{\otimes n}\|\sigma^{\otimes n})\, . \\
\ee
Since most useful quantum divergences are either subadditive or superadditive, Fekete's lemma guarantees that for such divergences the above liminf is actually a limit. An additive divergence $\mathbb{D}$ between states might give rise to a non-additive notion of divergence between channels, due to the presence of entanglement at the input. This happens even in the simple case of the Umegaki relative entropy\cite[Proposition~3.1]{Fang2020}: indeed, there exist two channels $\pazocal{M}$ and $\pazocal{N}$ such that
\bb
D(\pazocal{M}\otimes\pazocal{M}\|\pazocal{N}\otimes\pazocal{N})>2D(\pazocal{M}\|\pazocal{N}).
\ee
It is then relevant to introduce the regularised version of~\eqref{eq:rel_ent_ch}:
\bb
\mathbb{D}^\infty(\pazocal{M}\|\pazocal{N})&\coloneqq \liminf_{n\to\infty}\frac{1}{n}\,\mathbb{D}(\pazocal{M}^{\otimes n}\|\pazocal{N}^{\otimes n})\,. \\
\ee
The Umegaki relative entropy is known to be \emph{weakly concave}: namely, for any ensemble of states $\{(p_i,\rho_i)\}_i$, we have
\begin{equation}\label{eq:weak_conc}
    D\!\left(\sum_{i=1}^N p_i \rho_i \,\middle\|\, \sigma\right)
    \ge \sum_{i=1}^N p_i\, D(\rho_i\|\sigma) + \sum_i p_i \log p_i
\end{equation}
The previous inequality can be weakened as
$D\!\left(\sum_{i=1}^N p_i \rho_i \,\middle\|\, \sigma\right)
    \ge \displaystyle{\min_{1\leq i\leq N}}\, D(\rho_i\|\sigma) -\log N$.
We say that an arbitrary divergence $\mathbb D$ is \emph{weakly quasi-concave} if an analogous property holds, namely if there exists a polynomial $P$, such that, for any $n\geq 1$, for any finite ensemble of states $\{(p_i,\rho_i)\}_{i=1,\dots, N}$ on an arbitrary Hilbert space $\mathcal{H}^{\otimes n}$, $\mathrm{dim}\,\mathcal{H}=d$, and for any state $\sigma\in\mathcal{D}(\mathcal{H}^{\otimes n})$, we have
\bb\label{eq_weak_quasi_conc}
    \mathbb{D}\left(\sum_{i=1}^N p_i\rho_i\,\middle\|\,\sigma\right)\geq \min_{1\leq i\leq N}\mathbb{D}(\rho_i\|\sigma)-\log P_d(N, s_\sigma),
\ee
where $s_\sigma\coloneqq |{\rm spec}(\sigma)|$.
%
Besides the Umegaki relative entropy, an important family of quantum divergences satisfies weak quasi-concavity: the sandwiched R\'enyi divergences $\tilde{D}_\alpha$ of order $\alpha\in [1/2,\infty]$~\cite{tomamichel12smooth_tutorial, newRenyi, Wilde2014} (see e.g.~\cite{random_pur_simple} for a concise proof).

In order to state the Uhlmann theorem for divergences between states found in~\cite{Mazzola_2025, Fang2025-variational,random_pur_simple}, we need a final definition.

\begin{Def}\label{def:ext}
    Given a state $\sigma_A\in\mathcal{D}(\mathcal{H}_A)$ and a Hilbert space $\mathcal{H}_B$ isomorphic to $\mathcal{H}_A$, we define the set $\pazocal{C}_{AB}^{\sigma_A}$ of $B$\emph{-extensions} of $\sigma_A$ as
    \bb
        \pazocal{C}_{AB}^{\sigma_A}\coloneqq\left\{\tilde \sigma_{AB}\in\mathcal{D}(\mathcal{H}_A\otimes\mathcal{H}_B)\,:\, \Tr_B\tilde\sigma_{AB}=\sigma_A\right\},
    \ee
    and the family $\mathcal{C}_{AB}^{\sigma_A}$ as the sequence $\mathcal{C}_{AB}^{\sigma_A}\coloneqq \left(\pazocal{C}_{A^nB^n}^{\sigma_A^{\otimes n}}\right)_{n\geq 1}$.
 According to standard conventions, the regularised relative entropy between an extension $\rho_{AB}$ of $\rho_A$ and the family $\mathcal{C}_{AB}^{\sigma_A}$ is then defined as
\bb
\rel{\mathbb{D}^\infty}{\rho_{AB}}{\mathcal{C}_{AB}^{\sigma_A}} &\coloneqq \liminf_{n\to\infty}\frac{1}{n}\inf_{\sigma_{A^nB^n}\in \pazocal{C}_{A^nB^n}^{\sigma_A^{\otimes n}}} \rel{\mathbb{D}}{\rho_{AB}^{\otimes n}}{\sigma_{A^nB^n}}\,.
\ee
\end{Def}

\begin{thm}[(Axiomatic Uhlmann's theorem for states~\cite{Mazzola_2025, Fang2025-variational,random_pur_simple})]\label{thm:Uhlmann}
Let $\mathbb{D}(\,\cdot\,\|\,\cdot\,)$ be a divergence satisfying weak quasi-concavity, i.e.~\eqref{eq_weak_quasi_conc}.
Then, given $\rho_A$ and $\sigma_A$ in $\mathcal{D}(\mathcal{H}_A)$, for any arbitrary extension $\rho_{AB}$ of $\rho_A$ we have
\bb\label{eq:Uhlmann}
    \mathbb{D}^{\infty}(\rho_A\|\sigma_A)= \rel{\mathbb{D}^\infty}{\rho_{AB}}{\mathcal{C}^{\sigma_A}_{AB}}\,.
\ee
\end{thm}
{}

Similarly to Definition~\ref{def:ext}, let us introduce the set of all extensions of a channel.
\begin{Def}
    Let $\mathcal{H}_A,\mathcal{H}_B$ and $\mathcal{H}_E$ be Hilbert spaces. Given a quantum channel $\pazocal{N}_{A\to B}$, an $E$\emph{-dilation} of $\pazocal{N}_{A\to B}$ is a quantum channel $\widebar{\pazocal{N}}_{A\to BE}$ such that
    \bb
        ({\rm Id}_B\otimes \Tr_E)\circ\widebar{\pazocal{N}}_{A\to BE}=\pazocal{N}_{A\to B}\,.
    \ee    
    We denote by $\pazocal{C}_{A\to BE}^{\pazocal{N}}$ the set of all the $E$-extensions of $\pazocal{N}_{A\to B}$, and we define the family $\mathcal{C}_{A\to BE}^{\pazocal{N}}$ to be the sequence $\mathcal{C}_{A\to BE}^{\pazocal{N}}\coloneqq \left(\pazocal{C}_{A^n\to B^n E^n}^{\pazocal{N}^{\otimes n}}\right)_{n\geq 1}$.
 Then, the regularised relative entropy between an extension $\widebar{\pazocal{M}}_{A\to BE}$ of $\pazocal{M}_{A\to B}$ and the family $\mathcal{C}_{A\to BE}^{\pazocal{N}}$ is defined as
\bb
\rel{\mathbb{D}^\infty}{\widebar{\pazocal{M}}_{A\to BE}}{\mathcal{C}_{A\to BE}^{\pazocal{N}}} &\coloneqq \liminf_{n\to\infty}\frac{1}{n}\inf_{\widebar{\pazocal{N}}\,\in\, \pazocal{C}_{A^n\to B^n E^n}^{\pazocal{N}^{\otimes n}}} \Rel{\mathbb{D}}{\widebar{\pazocal{M}}_{A\to BE}^{\otimes n}}{\widebar{\pazocal{N}}_{A^n\to B^nE^n}}\, .
\ee
\end{Def}

Finally, we need three technical lemmas in order to prove the main result of this section. The first one provides an extension of~\cite[Lemma~3]{random_pur_simple} to channels; the second one, instead, generalises the known fact that all the extensions of a quantum state can be obtained by applying a suitable channel to a fixed purification (see e.g.\ the discussion in~\cite[Section~III]{squashed}).

\begin{lemma}\label{lem:weak_qc}
    Let $\mathcal{H}_A$ and $\mathcal{H}_B$ be two Hilbert spaces, and let $\mathbb{D}$ be a weakly quasi-concave divergence according to~\eqref{eq_weak_quasi_conc}. Then,
\bb
    \mathbb{D}\left(\EE{\Lambda\sim \nu}\Lambda^{\otimes n}\,\middle\|\,\Gamma^{(n)}\right)\geq \sup_{\rho_{RA^n}}\min_{\Lambda\in\supp(\nu)} \Rel{\mathbb{D}}{\Lambda^{\otimes n}(\rho)}{\Gamma^{(n)}(\rho)} - \log {\rm poly}_{d}\Big(n,\,\big|\mathrm{spec}\big(\Gamma^{(n)}(\rho)\big)\big|\Big)
\ee
for all probability measures $\nu$ on the set of channels from $\mathcal{H}_A$ to $\mathcal{H}_B$, and all channels $\Gamma^{(n)}$ from $\mathcal{H}_A^{\otimes n}$ to $\mathcal{H}_B^{\otimes n}$, where $d\coloneqq(\dim\mathcal{H}_A)(\dim\mathcal{H}_B)$. 
\end{lemma}

\begin{proof}
Let $J_{(A'B)^n}$ be the Choi operator of the channel $\EE{\Lambda\sim \nu}\Lambda^{\otimes n}$, and let $H_{d,n}^{\rm sym}$ be the real vector space of permutationally symmetric Hermitian operators on $\mathcal{H}_{A'B}^{\otimes n}$; then,
\bb\label{eq:convex}
    J_{(A'B)^n} = \EE{\Lambda\sim \nu}\big[\Lambda^{\otimes n}_{A\to B}(\Gamma_{A'A}^{\otimes n})\big]= \EE{\Lambda\sim \nu}\big[(J^{\Lambda}_{A'B})^{\otimes n}\big]\in H_{d,n}^{\rm sym} .
\ee
By Schur--Weyl duality, $H_{d,n}^{\rm sym}$ has the form
\bb
H_{d,n}^{\rm sym}=\bigoplus_{\lambda\in\pazocal{Y}_n^d} \id_{[\lambda]}\otimes H(\pazocal{U}_\lambda),
\label{eq:space_perm_symm_operators}
\ee
where $\lambda$ ranges on the set $\pazocal{Y}_d^n$ of Young diagrams with size $n$ and depth at most $d$, $\pazocal{U}_\lambda$ and $[\lambda]$ are irreducible representations of the special unitary group ${\rm SU}(d)$ and of the symmetric group $S_n$, respectively, and $H(\pazocal{U}_\lambda)$ is the space of Hermitian operators on $\pazocal{U}_\lambda$. Leveraging the fact that $\dim \pazocal{U}_\lambda\leq (n+1)^{d(d-1)/2}$ and $|\pazocal{Y}_n^d|\leq (n+1)^{d-1}$~\cite[Eq.~(6.16) and~(6.18)]{Hayashi2016_grouptheoretic}, we can upper bound $\dim H_{d,n}^{\rm sym}\leq (n+1)^{d^2-1}$. As a consequence, since $J_{(A'B)^n}\in H_{d,n}^{\rm sym}$ belongs to the convex hull of $\big\{(J^{\Lambda}_{A'B})^{\otimes n}: \Lambda\in{\rm supp}(\nu) \big\}$, by Carath\'eodory's theorem we can write it as a convex combination of at most $N=(n+1)^{d^2-1} +1$ Choi operators $(J_{A'B}^{\Lambda_i})^{\otimes n}$ for suitable channels $\Lambda_i\in{\rm supp}(\nu)$:
\bb
J_{(A'B)^n}=\sum_{i=1}^Np_i(J_{A'B}^{\Lambda_i})^{\otimes n},
\ee
hence
\bb
    \EE{\Lambda\sim \nu}\Lambda^{\otimes n}=\sum_{i=1}^Np_i
    \Lambda_i^{\otimes n}.
\ee
Then,
\bb
    \mathbb{D}\left(\EE{\Lambda\sim \nu}\Lambda^{\otimes n}\,\middle\|\,\Gamma^{(n)}\right)
    &=\sup_{\rho_{RA^n}}\mathbb{D}\left(\sum_{i=1}^Np_i\Lambda_i^{\otimes n}(\rho)\,\middle\|\,\Gamma^{(n)}(\rho)\right)\\
    &\geq \sup_{\rho_{RA^n}} \min_{1\leq i\leq N} \mathbb{D}\left(\Lambda_i^{\otimes n}(\rho)\,\middle\|\,\Gamma^{(n)}(\rho)\right)-\log {\rm poly}_{d}\Big(n,\,\big|\mathrm{spec}\big(\Gamma^{(n)}(\rho)\big)\big|\Big) \\
    &\geq \sup_{\rho_{RA^n}} \min_{\Lambda\in\supp(\nu)} \mathbb{D}\left(\Lambda^{\otimes n}(\rho)\,\middle\|\,\Gamma^{(n)}(\rho)\right)-\log {\rm poly}_{d}\Big(n,\,\big|\mathrm{spec}\big(\Gamma^{(n)}(\rho)\big)\big|\Big),
\ee
where in the first inequality we have used the weak quasi-concavity of $\mathbb{D}$ and the fact that $N={\rm poly}_{d}(n)$. This concludes the proof.
\end{proof}

\begin{lemma}\label{lem:extension}
Let $\pazocal{M}_{A\to B}$ be a quantum channel, and let $\pazocal{V}^{\pazocal{M}}_{A\to BE}$ be one of its Stinespring dilations. Let $\widebar{\pazocal{M}}_{A\to BF}$ be a quantum channel that is an extension of $\pazocal{M}_{A\to B}$, in the sense that $\Tr_F \circ \widebar{\pazocal{M}}_{A\to BF} = \pazocal{M}_{A\to B}$. If $\dim \HH_F \geq \dim \HH_E$, then there exists a quantum channel $\Lambda_{E\to F}$ such that
\bb
\widebar{\pazocal{M}}_{A\to BF} = \Lambda_{E\to F}\circ \pazocal{V}^{\pazocal{M}}_{A\to BE} .
\ee
That is, all extensions of a quantum channel (up to enlarging the dimension of the extending system) can be obtained from a fixed Stinespring dilation by post-processing its environment.
\end{lemma}

\begin{proof}
Let $V^{\pazocal{M}}_{A\to BE}:\HH_A\to \HH_{BE}$ be the isometry such that $\pazocal{V}^{\pazocal{M}}_{A\to BE}(\,\cdot\,) = V^{\pazocal{M}}_{A\to BE} (\,\cdot\,) \big(V^{\pazocal{M}}_{A\to BE}\big)^\dag$, and let $W^{\widebar{\pazocal{M}}}_{A\to BFG}:\HH_A\to \HH_{BFG}$ be the isometry corresponding to a Stinespring dilation of $\widebar{\pazocal{M}}_{A\to BF}$, i.e.\ such that
\bb
\widebar{\pazocal{M}}_{A\to BF}(\,\cdot\,) = \Tr_G \Big[ W^{\widebar{\pazocal{M}}}_{A\to BFG} (\,\cdot\,) \big(W^{\widebar{\pazocal{M}}}_{A\to BFG}\big)^\dag \Big] .
\ee
Since
\bb
\dim \HH_{FG} = (\dim \HH_F) (\dim \HH_G) \geq \dim \HH_F \geq \dim \HH_E ,
\ee
elementary linear algebra considerations ensure that we can construct an isometry $Z_{E\to FG}:\HH_E\to \HH_{FG}$ with the property that
\bb
Z_{E\to FG} \circ V^{\pazocal{M}}_{A\to BE} = W^{\widebar{\pazocal{M}}}_{A\to BFG} .
\ee
Defining $\Lambda_{E\to F}(\,\cdot\,) \coloneqq \Tr_G \big[ Z_{E\to FG}^{\vphantom{\dag}} (\,\cdot\,) Z_{E\to FG}^\dag\big]$, we see that
\bb
\widebar{\pazocal{M}}_{A\to BF}(\,\cdot\,) &= \Tr_G \Big[ W^{\widebar{\pazocal{M}}}_{A\to BFG} (\,\cdot\,) \big(W^{\widebar{\pazocal{M}}}_{A\to BFG}\big)^\dag \Big] \\
&= \Tr_G \Big[ \big(Z_{E\to FG} \circ V^{\pazocal{M}}_{A\to BE}\big) (\,\cdot\,) \big(Z_{E\to FG} \circ V^{\pazocal{M}}_{A\to BE}\big)^\dag \Big] \\
&= \big(\Lambda_{E\to F}\circ \pazocal{V}^{\pazocal{M}}_{A\to BE}\big)(\,\cdot\,) ,
\ee
which concludes the proof.
\end{proof}

An analogous reasoning can be used to show the following.

\begin{lemma}\label{lemma:bound_spec}
Let $\psi_{A^n R}$ be a pure state, and let $\nu$ be a probability measure over the set of isometries from $\HH_A$ to $\HH_B$. Set $\sigma_{B^n R}\coloneqq \EE{\pazocal{V}\sim \nu}\Big[\pazocal{V}^{\otimes n}_{A\to B}(\psi_{A^n R})\Big]$. Then $\big|\mathrm{spec}(\sigma)\big|=O\big(\mathrm{poly}_{d_B}(n)\big)$, where $d_B=\mathrm{dim}\, \mathcal{H}_B$.
\end{lemma}

\begin{proof}
As in the proof of Lemma~\ref{lem:weak_qc}, applying Carath\'{e}odory's theorem to the Choi state of the channel $\EE{\pazocal{V}\sim \nu}\pazocal{V}^{\otimes n}_{A\to B}$, which belongs to the real vector space of Hermitian permutationally symmetric operators on $\HH_{A'B}^{\otimes n}$, we can write 
\bb
\EE{\pazocal{V}\sim \nu} \pazocal{V}^{\otimes n}_{A\to B} = \sum_{i=1}^N p_i \pazocal{V}^{\otimes n}_{i}\, ,
\label{lemma:bound_spec_proof_eq1}
\ee
for some choice of isometries $\pazocal{V}_{i}:\HH_A\to \HH_B$, $i=1,\ldots,N$, and 
\bb
N=(n+1)^{(d_A d_B)^2-1} + 1 \leq (n+1)^{d_B^4-1} + 1\, .
\ee
Applying~\eqref{lemma:bound_spec_proof_eq1} to $\psi_{A^nR}$ and noticing that $\big|\mathrm{spec}\big(\sumno_{i=1}^N p_i \Psi_i \big)\big| \leq N+1$ directly shows the claim.
%
\end{proof}

Now we have all the ingredients to state and prove a completely new characterisation of the relative entropy between channels in terms of their extensions.

\begin{boxedthm}{}
\begin{thm}[(Axiomatic Uhlmann's theorem for channels)]\label{thm:Uhlmann2}
Let $\mathbb{D}(\,\cdot\,\|\,\cdot\,)$ be a jointly convex divergence that obeys weak quasi-concavity, i.e.~\eqref{eq_weak_quasi_conc}. 
Let $\pazocal{M}_{A\to B}$ and $\pazocal{N}_{A\to B}$ be quantum channels from $\mathcal{H}_A$ to $\mathcal{H}_B$, and let $\mathcal{H}_E$ be a Hilbert space of dimension $\dim \mathcal{H}_A\cdot\dim\mathcal{H}_B$. Then, for any arbitrary $E$-dilation $\widebar{\pazocal{M}}_{A\to BE}$ of $\pazocal{M}_{A\to B}$, we have
\bb\label{eq:Uhlmann2}
    \mathbb{D}^{\infty}(\pazocal{M}\|\pazocal{N})= \Rel{\mathbb{D}^\infty}{\widebar{\pazocal{M}}_{A\to BE}}{\mathcal{C}_{A\to BE}^{\pazocal{N}}}\, .
\ee
Moreover, a sequence of asymptotic optimisers $\big(\widebar{\pazocal{N}}_{A^n\to B^nE^n}\big)_n\in \mathcal{C}^{\pazocal{N}}_{A\to BE}$ is
\bb\label{eq:almost_optimizers2}
    \widebar{\pazocal{N}}_{A^n\to B^nE^n}=\Lambda_{E\to E}^{\otimes n}\circ\pazocal{D}_{B^nM\to B^nE^n}\circ \pazocal{N}^{\otimes n}_{A\to B}\circ \pazocal{E}_{A^n\to A^nM}\,,
\ee
where $\Lambda_{E\to E}$ is any channel that, by acting only on the auxiliary system $E$, maps a fixed $E$-dilation of $\pazocal{M}_{A\to B}$ to the chosen $E$-dilation $\widebar{\pazocal{M}}_{A\to BE}$, and, for each $n\geq 1$, $\pazocal{E}$ and $\pazocal{D}$ are the encoder and the decoder channels defined in Theorem~\ref{thm1}, respectively.
\end{thm}
\end{boxedthm}

\begin{proof}
The inequality $\mathbb{D}^{\infty}(\pazocal{M}\|\pazocal{N})\leq  \mathbb{D}^\infty\big(\widebar{\pazocal{M}}_{A\to BE}\,\big\|\,\mathcal{C}_{A\to BE}^{\pazocal{N}}\big)$ immediately follows from the data-processing inequality for $\mathbb{D}$, by applying the channel ${\rm Id}_{B^n}\otimes\Tr_{E^n}[\,\cdot\,]$ in the very definition of the right-hand-side of~\eqref{eq:Uhlmann2} for any $n\geq 1$. 

Let us now prove the converse inequality. First, it suffices to consider the case where the $E$-dilation 
\bb 
\widebar{\pazocal{M}}_{A\to BE}(\,\cdot\,) = \pazocal{V}^{\pazocal{M}}_{A\to BE}(\,\cdot\,)\coloneqq V^{\pazocal{M}}_{A\to BE}\,\cdot\,V^{\pazocal{M}\; \dagger}_{A\to BE}
\ee
of $\pazocal{M}_{A\to B}$ is an isometry. Indeed, by Lemma~\ref{lem:extension}, any other $E$-dilation $\widebar{\pazocal{M}}'_{A\to BE}$ can be obtained by applying a suitable quantum channel $\Lambda_{E\to E}$ to the auxiliary system: 
\bb
\widebar{\pazocal{M}}'_{A\to BE} =({\rm Id}_B\otimes \Lambda_{E\to E})\circ \pazocal{V}^{\pazocal{M}}_{A\to BE}.
\ee
Hence, 
\bb
\inf_{\widebar{\pazocal{N}}\in \pazocal{C}_{n}^{\pazocal{N}}} \Rel{\mathbb{D}}{\widebar{\pazocal{M}}'^{\,\otimes n}_{A\to BE}}{\widebar{\pazocal{N}}_{A^n\to B^nE^n}} &\leq \inf_{\widetilde{\pazocal{N}}\in \pazocal{C}_{n}^{\pazocal{N}}} \Rel{\mathbb{D}}{\Lambda_{E\to E'}^{\otimes n}\circ \pazocal{V}^{\pazocal{M}\;\otimes n}_{A\to BE}}{\Lambda_{E\to E'}^{\otimes n}\circ \widetilde{\pazocal{N}}_{A^n\to B^nE^n}} \\
&\leq \inf_{\widetilde{\pazocal{N}}\in \pazocal{C}_{n}^{\pazocal{N}}} \Rel{\mathbb{D}}{\pazocal{V}^{\pazocal{M}\;\otimes n}_{A\to BE}}{\widetilde{\pazocal{N}}_{A^n\to B^nE^n}}\, .
\ee
Here, the first inequality holds by taking as ansatzes all $E$-dilations of $\pazocal{N}_{A\to B}$ of the form $\Lambda_{E\to E'}^{\otimes n}\circ \widetilde{\pazocal{N}}_{A^n\to B^nE^n}$, where $\widetilde{\pazocal{N}}_{A^n\to B^nE^n} \in \pazocal{C}^{\pazocal{N}}_n \coloneqq \pazocal{C}^{\pazocal{N}^{\otimes n}}_{A^n\to B^nE^n}$; the second inequality, instead, is simply data-processing.
Now we are going to show that the right-hand-side of the above equation is upper bounded by $n\mathbb{D}^\infty(\pazocal{M}\|\pazocal{N})$ up to terms that are sublinear in $n$; this will complete the proof. To this end, we lower bound
\begin{align}\label{eq:inequalities}
    \frac{1}{n}\mathbb{D}\left(\pazocal{M}_{A\to B}^{\otimes n}\middle\|\pazocal{N}_{A\to B}^{\otimes n}\right)
    &= \sup_{\rho_{A^nR}}\frac{1}{n}\mathbb{D}\left(\pazocal{M}_{A\to B}^{\otimes n}(\rho_{A^nR})\middle\|\pazocal{N}_{A\to B}^{\otimes n}(\rho_{A^nR})\right)\\
    \nonumber& \geqt{(i)} \sup_{\tilde \rho_{A^nR'}}\frac{1}{n}\mathbb{D}\left((\pazocal{M}_{A\to B}^{\otimes n}\circ \pazocal{E}_{A^n\to A^nM})(\tilde\rho_{A^nR'}) 
    \middle\|(\pazocal{N}_{A\to B}^{\otimes n}\circ \pazocal{E}_{A^n\to A^nM})(\tilde\rho_{A^nR'})\right)\\
    \nonumber&\geqt{(ii)}\sup_{\tilde \rho_{A^nR'}}\frac{1}{n}\mathbb{D}\left(\EE{\pazocal{V}^{\pazocal{M}}}\Big[\pazocal{V}^{\pazocal{M}\;\otimes n}_{A\to BE}(\tilde \rho_{A^nR'})\Big]\,\middle\|\, \EE{\pazocal{V}^{\pazocal{N}}}\Big[\pazocal{V}^{\pazocal{N}\;\otimes n}_{A\to BE}(\tilde \rho_{A^nR'})\Big]\right)\\[4pt]
    \nonumber&\geqt{(iii)}\sup_{\tilde \rho_{A^nR'}}\frac{1}{n}\min_{\pazocal{V}^{\pazocal{M}}}\mathbb{D}\left(\pazocal{V}^{\pazocal{M}\;\otimes n}_{A\to BE}(\tilde \rho_{A^nR'})\,\middle\|\, \EE{\pazocal{V}^{\pazocal{N}}}\Big[\pazocal{V}^{\pazocal{N}\;\otimes n}_{A\to BE}(\tilde\rho_{A^nR'})\Big]\right)-\tfrac{\log{\rm poly }(n)}{n}\\[4pt]
    \nonumber& \eqt{(iv)}\frac{1}{n}\mathbb{D}\left(\widebar {\pazocal{V}}^{\pazocal{M}\;\otimes n}_{A\to BE}\,\middle\|\, \EE{\pazocal{V}^{\pazocal{N}}}\Big[\pazocal{V}^{\pazocal{N}\;\otimes n}_{A\to BE}\Big]\right)-\tfrac{\log{\rm poly }(n)}{n}\\[4pt]
    \nonumber&\geqt{(v)}\frac{1}{n}\inf_{\tilde{\pazocal{N}}\in\pazocal{C}_n^{\pazocal{N}}}\mathbb{D}\left(\widebar {\pazocal{V}}^{\pazocal{M}\;\otimes n}_{A\to BE}\,\middle\|\, \tilde{\pazocal{N}}\right)-\tfrac{\log{\rm poly }(n)}{n},
\end{align}
where $\pazocal{C}^{\pazocal{N}}_n \coloneqq \pazocal{C}^{\pazocal{N}^{\otimes n}}_{A^n\to B^nE^n}$, as before, and 
\bb
\EE{\pazocal{V}^{\pazocal{M}}}\Big[\pazocal{V}^{\pazocal{M}\;\otimes n}_{A\to BE}(\tilde \rho_{A^nR'})\Big]\coloneqq \EE{U_E}\!\left[
\big((\id_B\otimes U_E)V^{\pazocal{M}}_{A\to BE}\big)^{\otimes n}
(\tilde \rho_{A^nR'})
\big(V_{A\to BE}^{\pazocal{M}\;\dagger}(\id_B\otimes U_E^\dagger)\big)^{\otimes n}
\right];
\ee
in particular,
\begin{itemize}
    \item in (i) we have chosen the auxiliary system $R$ to  be of the form $R=MR'$, with $R'$ arbitrary and $M$ being the memory system appearing in Theorem~\ref{thm1}, and we have restricted the supremum to states $\rho_{A^nR}$ of the form $(\pazocal{E}_{A^n\to A^n M}\otimes {\rm Id}_{R'})(\tilde\rho_{A^nR'})$, where $\tilde\rho_{A^nR'}$ is arbitrary and $\pazocal{E}_{A^n\to A^nM}$ is the encoder introduced in Theorem~\ref{thm1};
    \item the lower bound in (ii) is the data-processing inequality when applying the decoding channel $\pazocal{D}_{B^nM\to B^nE^n}$ of Theorem~\ref{thm1} to both arguments of the divergence $\mathbb{D}$; as a result, by~\eqref{eq_random_isometry}, we get $n$ copies of the random Stinespring dilations $\pazocal{V}^{\pazocal{M}}_{A\to BE}$ of $\pazocal{M}$ and $\pazocal{V}^{\pazocal{N}}_{A\to BE}$ of $\pazocal{N}$, respectively;
    \item in (iii) we have leveraged Lemmas~\ref{lem:weak_qc} and~\ref{lemma:bound_spec}, noting that the supremum can be restricted to pure states due to joint convexity of $\mathbb{D}$;
    \item in (iv) we have noticed that the function of $\pazocal{V}^{\pazocal{M}}_{A\to BE}$ to be minimised actually is independent of $\pazocal{V}^{\pazocal{M}}_{A\to BE}$, therefore we can choose any arbitrary fixed dilation $\widebar{\pazocal{V}}^{\pazocal{M}}_{A\to BE}$; indeed, for any fixed Stinespring dilation $\pazocal{V}^{\pazocal{M}}_{A\to BE}$, we can apply a local unitary channel $\pazocal{U}_E$ on the system $E$ to get $\widebar{\pazocal{V}}^{\pazocal{M}}_{A\to BE}$; in particular, by the unitary invariance of $\mathbb{D}$ --- which follows from the data-processing inequality --- and by the left-invariance of the Haar measure, we have
\bb
    &\mathbb{D}\left(\pazocal{V}^{\pazocal{M}\;\otimes n}_{A\to BE}(\rho_{A^nR'})\middle\| \;\EE{\pazocal{V}^{\pazocal{N}}}\Big[\pazocal{V}^{\pazocal{N}\;\otimes n}_{A\to BE}(\rho_{A^nR'})\Big]\right)\\
    &\qquad =\mathbb{D}\left(\big(\pazocal{U}_{E}\circ\pazocal{V}^{\pazocal{M}}_{A\to BE}\big)^{\otimes n}(\rho_{A^nR'})\middle\| \,\EE{\pazocal{V}^{\pazocal{N}}}\Big[\big(\pazocal{U}_E\circ\pazocal{V}^{\pazocal{N}}_{A\to BE}\big)^{\otimes n}(\rho_{A^nR'})\Big]\right)\\
    &\qquad=\mathbb{D}\left(\widebar{\pazocal{V}}^{\pazocal{M}\;\otimes n}_{A\to BE}(\rho_{A^nR'})\middle\| \;\EE{\pazocal{V}^{\pazocal{N}}}\Big[\pazocal{V}^{\pazocal{N}\;\otimes n}_{A\to BE}(\rho_{A^nR'})\Big]\right);
\ee
    \item finally, in (v) we have noticed that $\EE{\pazocal{V}^{\pazocal{N}}}\Big[\pazocal{V}^{\pazocal{N}\;\otimes n}_{A\to BE}\Big]\in \pazocal{C}_{A^n\to B^n E^n}^{\pazocal{N}^{\otimes n}}$.
\end{itemize}
Taking the limit $n\to\infty$ in~\eqref{eq:inequalities}, we get
\bb
    \mathbb{D}^{\infty}(\pazocal{M}\|\pazocal{N}) \geq \liminf_{n\rightarrow\infty}\frac{1}{n}\inf_{\tilde{\pazocal{N}}\in\pazocal{C}_n^{\pazocal{N}}}\mathbb{D}\left(\widebar {\pazocal{V}}^{\pazocal{M}\;\otimes n}_{A\to BE}\,\middle\|\, \tilde{\pazocal{N}}\right)
    =\mathbb{D}^\infty\big(\widebar{\pazocal{M}}_{A\to BE}\,\big\|\,\mathcal{C}_{A\to BE}^{\pazocal{N}}\big).
\ee
In particular, this proof implies that the sequence $\big(\widebar{\pazocal{N}}_{A^n\to B^nE^n}\big)_n\in \mathcal{C}^{\pazocal{N}}_{A\to BE}$ given by~\eqref{eq:almost_optimizers2} achieves the right-hand side of~\eqref{eq:Uhlmann2}.
\end{proof}

\section{Applications to quantum learning theory} \label{appl_learning}


In this section, we apply the random Stinespring superchannel to quantum learning theory, focusing on the problem of quantum channel learning~\cite{AMele2025, chen2025quantumchanneltomographyestimation}. More specifically, our construction reduces tomography of general quantum channels to tomography of isometries, leading to the optimal query complexity \(O(r d_A d_B)\) for learning rank-\(r\) quantum channels, recently established in~\cite{AMele2025,chen2025quantumchanneltomographyestimation}. We also develop an algebraic lower bound technique based on polynomial method~\cite{beals2001quantum} that allows us to prove a clean $\Omega(rd_Ad_B)$ lower bound without any logarithmic factors and is secure against arbitrary types of queries (e.g.\ queries to the inverse or controlled versions of the channel, or with indefinite causal order).
Together, this establishes $\Theta(rd_Ad_B)$ as the optimal query complexity of learning rank-$r$ quantum channels.

We note that alternative learning algorithms achieving the same $O(rd_Ad_B/\epsilon^2)$ query complexity have recently been developed in~\cite{AMele2025} via a random purification channel on Choi states and in~\cite{chen2025quantumchanneltomographyestimation} via a tester-dependent random dilation procedure (i.e.\ its construction depends explicitly on both the input state and on the measurement carried out at the output). Our results provide an alternative, state- and measurement-agnostic way to reduce channel learning to isometry learning. Meanwhile, the only known lower bound for general non-isometry channels is $\Omega(d_A^2d_B^2/\log(d_Ad_B))$ when the channels have full rank ($r=d_Ad_B$).
It has undesired logarithmic factors and holds when we only allow sequential queries of the channel~\cite{rosenthal2024quantum}. We now restate the main result of this section in the form of a quotable theorem.
\begin{boxedthm}{}
\begin{thm}[(Optimal query complexity of channel learning)]
\label{thm:channel-learning}
    Let $\Phi: \mathcal{L}(\mathcal{H}_A) \to \mathcal{L}(\mathcal{H}_B)$ be any quantum channel with input dimension $d_A$, output dimension $d_B\geq 2$, and rank at most $r$.
    From~\cite{AMele2025,chen2025quantumchanneltomographyestimation} it is known that there is a quantum learning algorithm that makes
    \begin{equation}
        n = O (rd_Ad_B/\epsilon^2)
    \end{equation}
    parallel queries to the channel $\Phi$ and outputs a classical description of a channel $\hat{\Phi}$ such that
    $\|\hat{\Phi}-\Phi\|_\diamond\leq \epsilon$ with probability at least $2/3$.
    Furthermore, any quantum algorithm that learns $\Phi$ to constant error with success probability at least $2/3$ must make at least
    \begin{equation}
        n\geq \Omega(rd_Ad_B)
    \end{equation}
    queries even if it is allowed to query $\Phi$ in an arbitrary way (e.g.\ query the inverse and controlled versions of $\Phi$ if they exist, or with indefinite causal order).
\end{thm}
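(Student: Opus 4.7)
The upper bound $O(rd_Ad_B/\epsilon^2)$ is invoked from~\cite{AMele2025,chen2025quantumchanneltomographyestimation}, but the plan is to derive it in a conceptually transparent way via Theorem~\ref{thm1}. Given $n$ parallel queries to $\Phi$, apply the random Stinespring superchannel to convert them into $n$ parallel queries of a random isometry $V':\mathcal{H}_A\to\mathcal{H}_B\otimes\mathcal{H}_E$ with $\dim\mathcal{H}_E=r$. One then feeds these into an off-the-shelf isometry learning algorithm (such as the one obtained in~\cite{AMele2025,chen2025quantumchanneltomographyestimation}), whose query complexity for isometries of input dimension $d_A$ and output dimension $d_Br$ is $O(d_A\cdot d_Br/\epsilon^2)$. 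From the output estimate $\hat V'$, one reads off $\hat\Phi(\,\cdot\,)\coloneqq\Tr_E[\hat V'(\,\cdot\,)\hat V'^\dagger]$, and continuity of the Stinespring dilation (which does not depend on the Haar-random unitary $U_E$, since $V'$ is still a valid Stinespring isometry of $\Phi$) gives $\|\hat\Phi-\Phi\|_\diamond\le O(\epsilon)$.

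\textbf{Lower bound.} The substantive new claim is $n\ge\Omega(rd_Ad_B)$ without logarithmic factors, valid against arbitrary query models. The plan is to combine the polynomial method~\cite{beals2001quantum} with an algebraic dimension-counting argument performed directly on the channel manifold. First, parametrise rank-$r$ channels by a Stinespring isometry $V:\mathcal{H}_A\to\mathcal{H}_B\otimes\mathcal{H}_E$ with $\dim\mathcal{H}_E=r$, modulo the $U(r)$ environment freedom; the resulting real manifold has dimension $2rd_Ad_B-d_A^2-r^2=\Theta(rd_Ad_B)$ whenever $r\le d_Ad_B$ and $d_B\ge 2$. Second, for any algorithm making $n$ queries to $\Phi$---parallel, adaptive, with inverse queries, controlled queries, or indefinite causal order---after fixing the input state and the final measurement, the probability $p_x(V)$ of any outcome $x$ is a polynomial in the matrix entries of $V$ and $V^*$ of total degree at most $2n$, since each primitive query contributes at most one factor each of $V$ and $V^\dagger$. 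Third, a successful constant-error learner must yield a polynomial map $V\mapsto\{p_x(V)\}_x$ separating the parameter manifold at constant diamond-norm resolution. A dimension-counting argument---for example, by bounding the generic rank of the Jacobian of this polynomial map on the Stiefel manifold, or the dimension of its image as an algebraic variety---then forces $2n\ge\Omega(rd_Ad_B)$.

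\textbf{Main obstacle.} The hardest step is bounding the polynomial degree uniformly across the most general query model. For parallel, adaptive, controlled, and causally indefinite queries this is immediate, because the corresponding supermaps act linearly in the slot carrying $V$. The subtle case is that of \emph{inverse} queries, since $V^{-1}$ (equivalently $\Phi^{-1}$ when it exists) is rational, not polynomial, in the entries of $V$. The plan is to handle this by restricting to the generic locus on which $V^{-1}$ can be written as the adjugate divided by a determinant, tracking the accumulated denominator through the $n$ queries, and then extending the lower bound to the remaining channels by continuity and density. A secondary subtlety is that the final dimension-counting step takes place on the real algebraic Stiefel variety quotiented by $U(r)$ rather than on an open Euclidean ball; this can be addressed by working in local coordinates, or by invoking real-algebraic Bézout-type bounds adapted to the real Stiefel geometry.
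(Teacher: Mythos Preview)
Your upper-bound plan matches the paper's: reduce to isometry learning via Theorem~\ref{thm1} and invoke the $O(d_A\cdot d_Br/\epsilon^2)$ isometry algorithm.

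For the lower bound, your general framework (polynomial method plus dimension counting) is on the right track, but the decisive step is misidentified. You propose to bound ``the generic rank of the Jacobian of this polynomial map'' or ``the dimension of its image as an algebraic variety'' and conclude $2n\ge\Omega(rd_Ad_B)$. Neither of these quantities gives a degree lower bound: the Jacobian rank of a polynomial map out of a $D$-dimensional manifold is at most $D$ regardless of degree, and the image dimension is likewise at most $D$ regardless of degree. Even degree-$1$ polynomials can separate a $D$-dimensional manifold locally, so separability alone does not force the degree up. What is actually needed is a \emph{discrete} separation argument: construct an $\epsilon$-packing net of rank-$r$ channels of cardinality $M$ with $\log M=\Theta(rd_Ad_B)$, observe that a successful learner yields a distinguisher among these $M$ channels with success probability $>1/2$, and then bound the rank of the resulting $M\times M$ confusion matrix. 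The paper's key observation is that each entry $P_{\hat x,x}$ of this matrix is a polynomial of degree $n$ in $z^x\in\mathbb{C}^{rd_Ad_B}$ (the vectorised Kraus operators) and degree $n$ in $\bar z^x$, whence $P=\mathcal{W}\mathcal{Z}$ factors through the monomial space of dimension $\binom{n+rd_Ad_B-1}{n}^2$. Strict diagonal dominance forces $\mathrm{rank}(P)=M$, so $\log M\le 2\log\binom{n+rd_Ad_B-1}{n}$, and a short entropy estimate then gives $n\ge\Omega(rd_Ad_B)$. Your Jacobian/B\'ezout route does not reach this conclusion.

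On inverse queries, you flag this as the main obstacle and propose tracking the adjugate-over-determinant denominator through $n$ queries. The paper's treatment is simpler: an inverse query (when it exists) contributes the same monomials with $z^x$ and $\bar z^x$ swapped, and a controlled query pads with fixed constants, so the monomial count $\binom{n+rd_Ad_B-1}{n}^2$ is unchanged. No rational-function bookkeeping is required.
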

\end{boxedthm}

\noindent \textbf{Note added.} The improved lower bound on the query complexity of channel learning presented in Theorem~\ref{thm:channel-learning}, without logarithmic factors, has been obtained independently in version~2 of~\cite{AMele2025}.
\medskip

The upper bound in Theorem~\ref{thm:channel-learning} follows immediately from our random Stinespring superchannel and existing isometry learning algorithms, such as the one provided in~\cite{AMele2025} via Choi-state learning or the one in~\cite[Appendix A]{chen2025quantumchanneltomographyestimation}, which is a slight modification of the unitary tomography algorithm in~\cite{haah2023query}. These subroutines of isometry learning only make parallel queries to the isometry and therefore our random Stinespring superchannel can be directly applied.

On the lower bound front, it is intuitive that an $\Omega(rd_Ad_B)$ bound should hold by dimension counting.
But the only lower bound known for general non-isometry channels is $\Omega(d_A^2d_B^2/\log(d_Ad_B))$ when $r=d_Ad_B$ with an undesired logarithmic factor and holds when we only allow sequential queries of the channel~\cite{rosenthal2024quantum}.
This logarithmic factor comes from a crude information-theoretic analysis that does not take into account the permutation symmetry between queries of the channel (i.e.\ they are the same channel).
A natural way to make use of the permutation symmetry is via the heavy machinery of group representation theory.
For example,~\cite{haah2023query} shows an $\Omega(d^2/\epsilon)$ lower bound for unitary tomography ($r=1, d_A=d_B=d$) that does not have any logarithmic factor using a unitary distinguishing bound~\cite{bavaresco2022unitary} proved via Schur--Weyl duality.

However, this route is undesirable for several reasons: (1) it uses heavy group representation theory machinery that departs significantly from our simple intuition of dimension counting; (2) whether it can be generalised to channels is unclear since channels do not even form a group; (3) when we are allowed to make queries to the inverse or controlled versions of the channel (if they exist), the queries are no longer permutation symmetric.

To overcome these difficulties, we develop a lower bound proof technique that is purely algebraic and only depends on the linearity of quantum mechanics.
It completely circumvents the representation theory machinery and reduces everything to simple dimension counting.
The permutation symmetry is then used transparently in dimension counting.
This allows us to prove a channel distinguishing bound that extends the unitary version~\cite{bavaresco2022unitary} and is secure against any type of queries to the channel.
The claimed query lower bound for channel tomography follows directly from the standard reduction from learning to distinguishing using packing net.
We expect that this proof strategy can be applied to quantum channels with other parameterisations beyond bounded rank.

In the following, we detail the proof of the lower bound in Theorem~\ref{thm:channel-learning}.
We begin by explaining the algebraic proof that leads to the following channel distinguishing bound.

\begin{thm}[(Polynomial method for channel distinguishing)]
\label{lem:limitation}
    Let $\{\Phi_x\}_{x=1}^M$ be a set of quantum channels with input dimension $d_A$, output dimension $d_B$, and Choi rank $r\leq d_Ad_B$. 
    Given any quantum channel $\Phi_x, x\in [M]$ from the set, any quantum algorithm (even with indefinite causal order) that makes $n$ queries to the channel $\Phi_x$ and produces an outcome $\hat{X}\in [M]$ with correct probability $\Pr[\hat{X}=x|\Phi_x]>1/2$ for any $x\in [M]$ must satisfy
    \begin{equation}
        \frac{1}{2}\log M\leq \log \binom{n+rd_Ad_B-1}{n}.
    \end{equation}
    This still holds when the quantum algorithm is allowed to query the inverse and controlled versions of $\Phi_x$ if they exist.
\end{thm}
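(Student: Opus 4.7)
The plan is a polynomial-method argument: parameterize each channel by its Stinespring isometry, express the $n$-query acceptance probabilities as homogeneous polynomials of bidegree $(n,n)$ in that isometry and its conjugate, and use the distinguishing hypothesis to force the $M$ such polynomials to be linearly independent in a space of dimension $\binom{n+rd_Ad_B-1}{n}^2$.

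More concretely, writing $\Phi_x(\cdot)=\Tr_E[V_x(\cdot)V_x^\dagger]$ with Stinespring isometry $V_x:\mathcal{H}_A\to\mathcal{H}_B\otimes\mathcal{H}_E$ and $\dim\mathcal{H}_E=r$, each $V_x$ is a point in $\mathbb{C}^K$ with $K:=rd_Ad_B$. For any fixed $n$-query protocol and any output label $y\in[M]$, the acceptance probability $p_y(V):=\Pr[\hat{X}=y\,|\,\Phi_V]$ is a real-valued \emph{homogeneous} polynomial of bidegree $(n,n)$ in $(V,\bar{V})$: $\Phi_V$ is bilinear in $(V,V^\dagger)$, so each of the $n$ channel insertions contributes one $V$ and one $V^\dagger$ to the Born-rule expression. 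The same holds for inverse queries (well-defined only when $\Phi_V$ is unitary, in which case $V^{-1}=V^\dagger$ still contributes one $V$ and one $\bar{V}$), controlled queries, and process-matrix compositions implementing indefinite causal order, since all such supermaps are multilinear in each channel slot. The real vector space of real-valued homogeneous bidegree-$(n,n)$ polynomials in $K$ complex variables has real dimension $N^2$, where $N:=\binom{n+K-1}{n}$: the $N^2$ complex monomials $V^\alpha\bar{V}^\beta$ with $|\alpha|=|\beta|=n$ span the complex space, and the reality constraint $c_{\alpha,\beta}=\overline{c_{\beta,\alpha}}$ cuts the real dimension from $2N^2$ to exactly $N^2$.

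It remains to exploit the distinguishing hypothesis. Consider the $M\times M$ matrix $P$ with entries $P_{x,y}:=p_y(V_x)$; it is row-stochastic, with $P_{x,x}>\tfrac12$ and hence off-diagonal row sums strictly less than $\tfrac12$. Writing $P=D+\Delta$ with $D=\mathrm{diag}(P_{x,x})$ and $\Delta$ the off-diagonal part, the non-negative matrix $D^{-1}\Delta$ has row sums strictly less than $1$, so its spectral radius is $<1$ and $P=D(I+D^{-1}\Delta)$ is invertible. The columns of $P$ are therefore linearly independent over $\mathbb{R}$, which forces the polynomials $\{p_y\}_{y=1}^M$ to be linearly independent in the $N^2$-dimensional polynomial space. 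Hence $M\leq N^2=\binom{n+rd_Ad_B-1}{n}^2$, and taking $\tfrac12\log$ yields the claim.

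The hardest step is rigorously pinning down the bidegree-$(n,n)$ statement in the most general query model. For parallel and sequential queries it is essentially immediate, but inverse queries, controlled queries, and indefinite-causal-order protocols each require verifying that the resulting supermap is multilinear in its channel slots, so that ``$n$ queries'' genuinely contributes $n$ factors of $V$ and $n$ factors of $V^\dagger$ to any expression computed by the protocol. Once this multilinearity is established, the rest of the proof reduces to elementary dimension counting combined with the diagonal-dominance argument above.
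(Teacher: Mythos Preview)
Your proposal is correct and follows essentially the same approach as the paper's proof. Both arguments parameterize each channel by $K=rd_Ad_B$ complex numbers (you via the Stinespring isometry, the paper via Kraus operator entries), show that each acceptance probability is a bidegree-$(n,n)$ polynomial in those parameters and their conjugates, invoke strict diagonal dominance of the confusion matrix to obtain full rank $M$, and then bound $M$ by the monomial count $\binom{n+K-1}{n}^2$; the only cosmetic difference is that you phrase the last step as linear independence of the $M$ polynomials $p_y$ in an $N^2$-dimensional space, while the paper writes the same fact as a matrix factorization $P=\mathcal{W}\mathcal{Z}$ with inner dimension $N^2$.
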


\begin{proof}
    The proof generalises the polynomial method developed in~\cite{beals2001quantum,huang2021information}.
    The key idea is to exploit the linearity of quantum mechanics, which implies that the measurement probability of any quantum algorithm that makes $n$ queries to a channel must be a polynomial of the channel parameters with degree determined by $n$.
    But the degree cannot be too small in order to distinguish many channels.
    This gives a lower bound on the query complexity $n$.
    
    Suppose that there is a quantum algorithm that makes $n$ queries to the channel $\Phi_x$ and produces an outcome $\hat{X}\in [M]$ that has correct probability $\Pr[\hat{X}=x|\Phi_x]>1/2$ for any $x\in [M]$.
    We consider the confusion matrix $P \in \mathbb{R}^{M\times M}$ of this quantum algorithm.
    It is an $M\times M$ matrix with matrix elements $P_{\hat{x}x}, \hat{x}, x\in [M]$ representing the probability that the quantum algorithm predicts $\hat{x}$ when the quantum channel that it truly queries is $\Phi_x$.
    We have $\sum_{\hat{x}=1}^M P_{\hat{x}x}=1$ for all $x\in [M]$.
    The guarantee of correct probability implies that $P_{xx}>1/2$ for any $x\in [M]$.
    Therefore, 
    \begin{equation}
        \sum_{\hat{x}\neq x}P_{\hat{x}x}=1-P_{xx}< \frac{1}{2}< P_{xx}, \quad \forall x\in [M],
    \end{equation}
    meaning that the confusion matrix $P$ is strictly diagonally dominant.
    This implies that $P$ has full rank:
    \begin{equation}
        \mathrm{rank}(P)=M.
    \end{equation}
    On the other hand, all these matrix elements are measurement probabilities of a quantum algorithm querying the channel $\Phi_x$.
    Let
    \begin{equation}
        \Phi_x(\rho)=\sum_{i=1}^{r} K^x_i(\rho)K^{x\dagger}_i
    \end{equation}
    be the Kraus operator representation of the quantum channel $\Phi_x$.
    We use a complex vector $z^x\in \mathbb{C}^{rd_Ad_B}$ to collect all the parameters in the Kraus operators:
    \begin{equation}
        z^x_{i\cdot d_Ad_B+j\cdot d_A+k} = (K^x_i)_{jk}.
    \end{equation}
    Then the matrix elements of the output of the quantum channel $\Phi_x$ can be regarded as a polynomial of $z^x$ and its complex conjugate $\bar{z}^x$:
    \begin{equation}
        (\Phi_x(\rho))_{ij} = \sum_{a,b=1}^{rd_Ad_B} w_{ij,ab}z^x_a \bar{z}^x_b, \quad \forall i,j \in [d_B],
    \end{equation}
    where the coefficients $w_{ij,ab}\in \mathbb{C}$ are determined by the input state $\rho$.

    We generalise this polynomial representation to the measurement probability of an arbitrary quantum algorithm (possibly with indefinite causal order) querying the channel $\Phi_x$.
    The most general form of the measurement probability $P_{\hat{x}x}$ is represented as the contraction of a general \emph{algorithm tensor} $(T_{\hat{x}})_{i_1i'_1o_1o'_1\ldots i_ni'_no_no'_n}, i_1, \ldots, i_n, i'_1, \ldots, i'_n\in [d_A], o_1, \ldots, o_n, o'_1, \ldots, o'_n\in [d_B]$ with $n$ copies of the channel tensor $(\Phi_x)_{ii'oo'}, i,i'\in [d_A], o,o'\in [d_B]$:
    \begin{equation}
        P_{\hat{x}x} = \sum_{\substack{i_1, \ldots, i_n, i'_1, \ldots, i'_n\in [d_A]\\ o_1, \ldots, o_n, o'_1, \ldots, o'_n\in [d_B]}} (T_{\hat{x}})_{i_1i'_1o_1o'_1\ldots i_ni'_no_no'_n} (\Phi_x)_{i_1i'_1o_1o'_1}\cdots (\Phi_x)_{i_ni'_no_no'_n}.
    \end{equation}
    Here, for each $j\in [n]$, the indices $i_ji'_jo_jo'_j$ of the tensor $T_{\hat{x}}$ are contracted with the $n$-th copy of the channel $\Phi_x$.
    Note that the tensor $T_{\hat{x}}$ must satisfy certain conditions to ensure that the outcome is a proper probability (e.g.\ $P_{\hat{x}x}\in [0, 1], \sum_{\hat{x}}P_{\hat{x}x}=1$), but for our purposes we do not use those conditions.
    Plugging in the $z^x, \bar{z}^x$ parameterisation of the channel $\Phi_x$, we have
    \begin{equation}
        P_{\hat{x}x} = \sum_{\substack{a_1, \ldots, a_n\in [rd_A d_B]\\ b_1, \ldots, b_n\in [rd_A d_B]}} w^{\hat{x}}_{a_1\ldots a_n b_1\ldots b_n} z^x_{a_1}\cdots z^x_{a_n} \bar{z}^x_{b_1}\cdots \bar{z}^x_{b_n},
    \end{equation}
    where the coefficients $w^{\hat{x}}_{a_1\ldots a_n b_1\ldots b_n}$ are determined by the algorithm tensor $T_{\hat{x}}$ and the terms $z^x_{a_j} \bar{z}^x_{b_j}$ that are contributed by the $j$-th copy of the channel tensor $\Phi_x$.
    
    We note that this way of organising coefficients has redundancy, because the $n$ copies of $z$'s (and $\bar{z}$'s) are symmetric to each other.
    For example, the terms $z_1z_2$ and $z_2z_1$ are the same and can be grouped together to share one coefficient.
    In other words, the order in the indices $(a_1, \ldots, a_n)$ and $(b_1, \ldots, b_n)$ does not matter.
    We use $\mathrm{Sym}(rd_Ad_B, n)$ to denote the set of such unordered indices and use $Z^x_\alpha, \bar{Z}^x_\beta \in \mathbb{C}, \alpha, \beta\in \mathrm{Sym}(rd_Ad_B, n)$ to denote the terms $z^x_{a_1}\cdots z^x_{a_n}, \bar{z}^x_{b_1}\cdots \bar{z}^x_{b_n}$ corresponding to the unordered indices $\alpha, \beta$.
    Then we have the following polynomial representation of the probability
    \begin{equation}
        P_{\hat{x}x} = \sum_{\alpha, \beta\in \mathrm{Sym}(rd_Ad_B, n)} W^{\hat{x}}_{\alpha\beta} Z^x_{\alpha} \bar{Z}^x_\beta,
    \end{equation}
    where the coefficients $W^{\hat{x}}_{\alpha\beta}$ are the sum of all $w^{\hat{x}}_{a_1\ldots a_n b_1\ldots b_n}$ with $(a_1, \ldots, a_n), (b_1, \ldots, b_n)$ corresponding to $\alpha, \beta$.
    To count the size of $\mathrm{Sym}(rd_Ad_B, n)$, we note that each $\alpha \in \mathrm{Sym}(rd_Ad_B, n)$ can be labeled by the number of times $n_a$ each symbol $a\in [rd_Ad_B]$ appears in the unordered indices $\alpha$.
    They satisfy
    \begin{equation}
        \sum_{a=1}^{rd_Ad_B}n_a=n, \quad n_a\geq 0, \quad \forall a\in [rd_Ad_B].
    \end{equation}
    Standard combinatorial counting yields
    \begin{equation}
        |\mathrm{Sym}(rd_Ad_B, n)| = \binom{n+rd_Ad_B-1}{n}.
    \end{equation}
    The polynomial representation gives us a matrix decomposition of the confusion matrix $P$:
    \begin{equation}
        P = \mathcal{W}\mathcal{Z},
    \end{equation}
    where the matrices
    \begin{equation}
        \mathcal{W}\in \mathbb{C}^{M\times |\mathrm{Sym}(rd_Ad_B, n)|^2}, \quad \mathcal{Z}\in \mathbb{C}^{|\mathrm{Sym}(rd_Ad_B, n)|^2\times M},
    \end{equation}
    are given by the coefficients $W^{\hat{x}}_{\alpha\beta}$ and monomials $Z^x_{\alpha}\bar{Z}^x_{\beta}$: for each $x, \hat{x}\in [M]$, the $\hat{x}$-th row of $\mathcal{W}$ is the row vector $(W^{\hat{x}}_{\alpha\beta})_{\alpha, \beta \in \mathrm{Sym}(rd_Ad_B, n)}$ and the $x$-th column of $\mathcal{Z}$ is the column vector $((Z^x_{\alpha}\bar{Z}^x_{\beta})_{\alpha, \beta \in \mathrm{Sym}(rd_Ad_B, n)})^T$.
    In other words,
    \begin{equation}
        P_{\hat{x}x} = \sum_{\gamma=1}^{|\mathrm{Sym}(rd_Ad_B, n)|^2}\mathcal{W}_{\hat{x},\gamma}\mathcal{Z}_{\gamma, x}.
    \end{equation}
    Therefore, the rank of the confusion matrix satisfies
    \begin{equation}
        M=\mathrm{rank}(P)\leq |\mathrm{Sym}(rd_Ad_B, n)|^2 = \binom{n+rd_Ad_B-1}{n}^2.
    \end{equation}
    Taking the logarithm, we arrive at the desired result
    \begin{equation}
        \log M \leq 2 \log \binom{n+rd_Ad_B-1}{n}.
    \end{equation}

    When we are allowed to query the inverse and controlled versions of $\Phi_x$ if they exist, the contracted channel tensor is the same as that of $\Phi_x$ itself with $z^x$ and $\bar{z}^x$ swapped or padded with fixed numbers that represent the control pattern.
    This does not change the polynomial representation and the counting.
    Therefore, we still have
    \begin{equation}
        \log M \leq 2 \log \binom{n+rd_Ad_B-1}{n}.
    \end{equation}
    This completes the proof of Theorem~\ref{lem:limitation}.
\end{proof}

To prove a query complexity lower bound for learning, we instantiate the $M$ quantum channels with the maximal cardinality while keeping their distinguishability under a learning algorithm.
This can be done by constructing an $\epsilon$-packing net of the set of rank-$r$ channels.

\begin{defi}[(Packing net)]
    Let $(X,d)$ be a metric space. Let $K \subseteq X$ be a subset and $\epsilon > 0$. 
    Then, a subset $N \subseteq K$ is an \emph{$\epsilon$-packing net} of $K$ if for any $x,y \in N$, $d(x,y) > \epsilon$. 
    The \emph{packing number} $\mathcal{M}(K, d, \epsilon)$ of $K$ is the largest possible cardinality of an $\epsilon$-packing net of $K$.
\end{defi}

To construct a packing net for  channels, we first construct a packing net for isometries.

\begin{lemma}[(Packing number of isometries~\cite{szarek1997metric})]
\label{lem:packing-iso}
    Let $d_2\geq d_1$ be positive integers and $\|\cdot \|$ be the operator norm.
    Let $\mathcal{V}_{d_1\to d_2} = \{V\in \mathbb{C}^{d_2\times d_1}: V^\dagger V = \id_{d_1}\}$ be the set of isometries with input dimension $d_1$ and output dimension $d_2$, also known as the Stiefel manifold.
    It has dimension $\dim(\mathcal{V}_{d_1\to d_2}) = 2d_1d_2 - d_1^2 \in [d_1d_2, 2d_1d_2]$ and packing number
    \begin{equation}
        \left( \frac{C_1}{\epsilon} \right)^{\dim(\mathcal{V}_{d_1\to d_2})}\leq \mathcal{M}(\mathcal{V}_{d_1\to d_2}, \|\cdot \|, \epsilon) \leq \left( \frac{C_2}{\epsilon} \right)^{\dim(\mathcal{V}_{d_1\to d_2})}
    \end{equation}
    for some universal constants $C_1,C_2>0$.
    In particular, when $d_2=d_1=d$, we have that the packing number of the $d$-dimensional unitary group satisfies
    \begin{equation}
        \left( \frac{C_1}{\epsilon} \right)^{d^2}\leq \mathcal{M}(\mathcal{V}_{d\to d}, \|\cdot \|, \epsilon) \leq \left( \frac{C_2}{\epsilon} \right)^{d^2}.
    \end{equation}
\end{lemma}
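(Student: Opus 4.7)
The dimension formula follows by a transverse count: the real dimension of $\mathbb{C}^{d_2 \times d_1}$ is $2 d_1 d_2$, and the defining equation $V^\dagger V = \id_{d_1}$ imposes a Hermitian $d_1 \times d_1$ constraint, i.e.\ exactly $d_1^2$ independent real equations whose differential is surjective at every isometry. Hence $\mathcal{V}_{d_1\to d_2}$ is a smooth real manifold of dimension $2 d_1 d_2 - d_1^2$; the bracketing $[d_1 d_2,\,2 d_1 d_2]$ of this dimension follows from the hypothesis $d_1 \leq d_2$ (which makes $d_1^2 \leq d_1 d_2$).

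For the packing bounds the plan is a standard volume argument on the homogeneous space $\mathcal{V}_{d_1\to d_2} \cong U(d_2)/U(d_2-d_1)$ equipped with its $U(d_2)$-invariant probability measure $\mu$. The core quantitative input to establish is that operator-norm balls satisfy
\begin{equation}
(c_1 \epsilon)^{\dim \mathcal{V}_{d_1\to d_2}} \;\leq\; \mu\bigl(\{V : \|V - V_0\| \leq \epsilon\}\bigr) \;\leq\; (c_2 \epsilon)^{\dim \mathcal{V}_{d_1\to d_2}}
\end{equation}
for some universal $c_1, c_2 > 0$, for every basepoint $V_0 \in \mathcal{V}_{d_1\to d_2}$ and every $\epsilon$ up to a constant. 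By $U(d_2)$-invariance it suffices to treat a single basepoint, where one parametrises a neighbourhood of $V_0$ by the horizontal tangent space (of real dimension $2 d_1 d_2 - d_1^2$) and compares the induced operator-norm ball with a Euclidean ball of matching radius. Performing this Jacobian computation so that $c_1, c_2$ do not depend on $d_1, d_2$ is exactly what the analysis of~\cite{szarek1997metric} provides.

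Given the above volume estimate, both packing bounds follow by the standard duality between packing and covering. If $N$ is a maximal $\epsilon$-packing, the balls of radius $\epsilon/2$ around the points of $N$ are pairwise disjoint, giving $|N| \leq (C_2/\epsilon)^{\dim \mathcal{V}_{d_1\to d_2}}$ via additivity of $\mu$; by maximality the balls of radius $\epsilon$ cover $\mathcal{V}_{d_1\to d_2}$, giving $|N| \geq (C_1/\epsilon)^{\dim \mathcal{V}_{d_1\to d_2}}$. The unitary specialisation $d_1 = d_2 = d$ is then the familiar case $\mathcal{V}_{d\to d} = U(d)$ of real dimension $d^2$.

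The main obstacle is the uniformity of the constants $c_1, c_2$ in the dimensions: one must ensure that the operator norm and the intrinsic (Frobenius-type) tangent norm differ by factors not growing with $d_1$ or $d_2$, and that the exponential map on $\mathcal{V}_{d_1\to d_2}$ is bi-Lipschitz at scale $\epsilon$ with constants independent of dimension. These are precisely the issues resolved in~\cite{szarek1997metric}, which we invoke rather than reprove.
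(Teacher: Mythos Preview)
The paper does not prove this lemma at all: it is stated with a citation to~\cite{szarek1997metric} and used as a black box. Your sketch is a correct outline of the standard volume argument underlying that reference---dimension count via the Hermitian constraint, then packing/covering duality combined with two-sided volume estimates for operator-norm balls on the homogeneous space $U(d_2)/U(d_2-d_1)$---and you correctly identify the one genuinely nontrivial ingredient (dimension-independent constants in the ball-volume estimate) and defer it to~\cite{szarek1997metric}, just as the paper does implicitly.
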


The diamond norm distance between channels is connected with the operator norm distance of their Stinespring dilations via the following continuity lemma.

\begin{lemma}[(Continuity of Stinespring dilation~\cite{kretschmann2008information})]
\label{lem:cont-dilation}
    Let $\Phi_1, \Phi_2: \mathcal{L}(\mathcal{H}_A)\to \mathcal{L}(\mathcal{H}_B)$ be two quantum channels with Stinespring dilations $V_1, V_2: \mathcal{H}_A\to \mathcal{H}_B\otimes \mathcal{H}_E$.
    Then, we have
    \begin{equation}
	\inf_{U} \|(\id_B\otimes U)V_1-V_2\|^2 \leq \|\Phi_1-\Phi_2\|_\diamond \leq 2\inf_{U}\|(\id_B\otimes U)V_1-V_2\|,
    \end{equation}
    where the infimum is over all unitary $U$ on $\mathcal{H}_E$, $\|\cdot\|_\diamond$ is the diamond norm, and $\|\cdot\|$ is the operator norm.
\end{lemma}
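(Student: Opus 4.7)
The plan is to establish the two inequalities separately: the upper bound $\|\Phi_1-\Phi_2\|_\diamond \leq 2\inf_U \|(\id_B\otimes U)V_1 - V_2\|$ is a short algebraic manipulation, whereas the lower bound $\inf_U \|(\id_B\otimes U)V_1 - V_2\|^2 \leq \|\Phi_1-\Phi_2\|_\diamond$ combines Uhlmann's theorem with the Fuchs--van de Graaf inequality and a minimax step, which is the principal obstacle.

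For the upper bound, I would first observe that for any unitary $U$ on $\mathcal{H}_E$, the operator $V_1' \coloneqq (\id_B\otimes U)V_1$ is itself a Stinespring dilation of $\Phi_1$. For any state $\rho$ on $\mathcal{H}_A\otimes\mathcal{H}_R$, the algebraic identity
\[
V_1' \rho V_1'^\dagger - V_2\rho V_2^\dagger = (V_1' - V_2)\rho V_1'^\dagger + V_2\rho (V_1'-V_2)^\dagger,
\]
combined with contractivity of the partial trace, H\"older's inequality $\|AB\|_1\leq \|A\|\|B\|_1$, and $\|V_1'\|=\|V_2\|=1$, gives $\|((\Phi_1-\Phi_2)\otimes\Id_R)(\rho)\|_1 \leq 2\|V_1'-V_2\|$. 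Supremising over $\rho$ and infimising over $U$ yields the upper bound.

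For the lower bound, the core pointwise estimate comes from Uhlmann's theorem. Given any pure input $\ket{\Psi}_{AR}$, the vectors $(V_i\otimes \id_R)\ket{\Psi}$ are purifications of $\rho_i \coloneqq (\Phi_i\otimes\Id_R)(\ketbra{\Psi})$ with purifying system $\mathcal{H}_E$. Uhlmann's theorem therefore provides a unitary $U^{*}_\Psi$ on $\mathcal{H}_E$ such that, after absorbing a global phase, $\bra{\Psi}V_1^\dagger(\id_B\otimes U^{*}_\Psi)V_2\ket{\Psi} = F(\rho_1,\rho_2)$. A direct expansion gives
\[
\bigl\|\bigl[(\id_B\otimes U_\Psi^{*\dagger})V_1 - V_2\bigr]\otimes \id_R\ket{\Psi}\bigr\|^2 = 2 - 2F(\rho_1,\rho_2) \leq \|\rho_1-\rho_2\|_1 \leq \|\Phi_1-\Phi_2\|_\diamond,
\]
where the first inequality is the Fuchs--van de Graaf bound $2(1-F) \leq \|\rho_1-\rho_2\|_1$.

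The main obstacle is to upgrade this pointwise estimate, which requires the $\Psi$-dependent unitary $U_\Psi^{*}$, into a bound on $\inf_U \sup_\Psi\|\cdot\|^2$. I would address this via a minimax argument. Writing
\[
\|(\id_B\otimes U)V_1 - V_2\|^2 = \sup_\rho \Tr[A(U)\rho], \qquad A(U) \coloneqq 2\,\id_A - 2\,\mathrm{Re}\bigl[V_1^\dagger(\id_B\otimes U)V_2\bigr]
\]
(which uses $U$ unitary and $V_i$ isometric), the functional is affine in $\rho$ and in $U$. The plan is then to invoke Sion's minimax theorem after a suitable convex-compact parametrisation of the unitary variable --- for example, relaxing $U$ to a contraction $W$ on $\mathcal{H}_E$ and verifying via the polar-decomposition identity $\sup_{\|W\|\leq 1}\mathrm{Re}\Tr[WM] = \|M\|_1$ that the relaxation can be tightened back to unitaries --- which exchanges $\inf_U$ and $\sup_\rho$. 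Combined with the pointwise estimate, this closes the bound, and compactness of the unitary group guarantees that the infimum is attained.
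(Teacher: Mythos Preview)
The paper does not prove this lemma; it merely quotes it from Kretschmann--Schlingemann--Werner. Your proposal is therefore not being compared against a proof in the paper but against the cited reference, whose strategy you have correctly identified: the upper bound is an elementary H\"older-type estimate, while the lower bound combines Uhlmann's theorem with Fuchs--van de Graaf and a minimax exchange. Your upper bound and your pointwise Uhlmann/Fuchs--van de Graaf estimate are both correct.

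The gap is in the minimax step. Relaxing to contractions $W$ and applying Sion yields $\inf_W\sup_\rho f(W,\rho)=\sup_\rho\inf_W f(W,\rho)$, and the polar-decomposition identity $\sup_{\|W\|\le 1}\mathrm{Re}\,\Tr[WM]=\|M\|_1$ shows that the \emph{inner} infimum satisfies $\inf_W f(W,\rho)=\inf_U f(U,\rho)$ for each fixed $\rho$. This gives $\inf_W\sup_\rho f=\sup_\rho\inf_U f$, but what you need is $\inf_U\sup_\rho f$, and the relaxation only gives the trivial direction $\inf_U\sup_\rho\ge\inf_W\sup_\rho$. The polar argument tightens the inner problem, not the outer one: it does \emph{not} show that the minimiser of the convex function $W\mapsto\sup_\rho f(W,\rho)$ over the ball of contractions is a unitary, and there is no general reason it should be (minima of convex functions need not lie at extreme points).

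The way Kretschmann--Schlingemann--Werner close this is essentially by unitary dilation: any contraction $W$ on $\mathcal{H}_E$ is the compression $\iota^\dagger\tilde U\iota$ of a unitary $\tilde U$ on a doubled space $\mathcal{H}_E\oplus\mathcal{H}_E$, and for the embedded dilations $\tilde V_i=(\id_B\otimes\iota)V_i$ one has $\tilde V_2^\dagger(\id\otimes\tilde U)\tilde V_1=V_2^\dagger(\id\otimes W)V_1$, so $\sup_\rho f(\tilde U,\rho)=\sup_\rho f(W,\rho)$. This converts the contraction infimum into a genuine unitary infimum on the enlarged environment; one then uses that all Stinespring dilations into a common (sufficiently large) environment are unitarily related to pull the bound back to the original $\mathcal{H}_E$. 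Either incorporate this dilation step, or argue more carefully that a saddle point of the relaxed problem can be taken with unitary first coordinate; as written, the ``tighten back'' step does not do the job.
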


This shows that channels can be viewed as isometries with the unitary group on the environment quotient out.
This observation enables us to bound the packing number of channels in diamond norm as follows.

\begin{lemma}[(Packing number of channels)]
\label{lem:packing-channel}
    Let $\mathcal{C}_{d_A, d_B, r}$ be the set of quantum channels with input dimension $d_A$, output dimension $d_B$, and rank $r$.
    Assume that $d_B\geq 2$.
    We have
    \begin{equation}
        \log\mathcal{M}(\mathcal{C}_{d_A, d_B, r}, \|\cdot \|_\diamond, \epsilon) = \Theta\left(rd_A d_B \log(1/\epsilon)\right).
    \end{equation}
\end{lemma}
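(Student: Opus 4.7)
The plan is to establish matching bounds through two complementary parametrizations of a rank-$r$ channel: by its Choi state for the upper bound, and by its Stinespring isometry for the lower bound.

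\emph{Upper bound.} I would first observe that the normalized Choi map is $1$-Lipschitz from channels in diamond norm to Choi states in trace norm, since
\[
\|J_{\Phi_1}-J_{\Phi_2}\|_1 = \bigl\|\bigl(\mathrm{id}_A\otimes(\Phi_1-\Phi_2)\bigr)\bigl(\tfrac{1}{d_A}\ketbra{\Gamma}\bigr)\bigr\|_1 \le \|\Phi_1-\Phi_2\|_\diamond.
\]
Hence any $\epsilon$-packing of $\mathcal{C}_{d_A,d_B,r}$ in diamond norm projects to an $\epsilon$-packing of Choi states in trace norm, and it suffices to bound the latter. The rank-$r$ Choi states form a bounded subset of the smooth manifold of rank-$r$ density matrices on $\mathcal{H}_A\otimes\mathcal{H}_B$ with $A$-marginal fixed to $\mathds{1}_A/d_A$, whose real dimension is $2rd_Ad_B-r^2-d_A^2=O(rd_Ad_B)$. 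A standard ball-volume comparison on this manifold (e.g.\ parametrising rank-$r$ density matrices via their spectrum and eigenframe, then invoking Lemma~\ref{lem:packing-iso}) yields $\mathcal{M}\le(C/\epsilon)^{O(rd_Ad_B)}$.

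\emph{Lower bound.} I would apply Lemma~\ref{lem:packing-iso} to $\mathcal{V}_{d_A\to d_Br}$ (admissible since $d_A\le rd_B$ follows from the existence of rank-$r$ channels; its real dimension is $D\coloneqq 2rd_Ad_B-d_A^2$) to obtain a $\sqrt{\epsilon/2}$-packing $\mathcal{N}$ of cardinality at least $(C_1\sqrt{2/\epsilon})^D$ in operator norm. Each $V\in\mathcal{N}$ induces a rank-$r$ channel $\Phi_V$ (genericity of the Choi rank being $r$ is automatic on an open dense subset), and Lemma~\ref{lem:cont-dilation} gives
\[
\|\Phi_{V_1}-\Phi_{V_2}\|_\diamond \ge \inf_{U\in\mathrm{U}(r)}\|(\mathds{1}_B\otimes U)V_1-V_2\|^2.
\]
The obstruction is the $\mathrm{U}(r)$ gauge redundancy on the environment: distinct isometries in the same orbit define the same channel, so operator-norm separation in $\mathcal{V}_{d_A\to d_Br}$ does not translate verbatim into diamond-norm separation in channel space. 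To resolve this, I would prune $\mathcal{N}$ greedily. Each orbit $\mathcal{O}_V=\{(\mathds{1}_B\otimes U)V:U\in\mathrm{U}(r)\}$ is a smooth compact submanifold of dimension $r^2$ (the action is free on rank-$r$ isometries), and a tube-volume estimate in its normal bundle shows its $\sqrt{\epsilon/2}$-neighbourhood meets $\mathcal{N}$ in at most $(C'/\sqrt{\epsilon})^{r^2}$ points. Each pruning step therefore kills at most this many candidates, leaving a surviving set $\mathcal{N}'$ of size at least $(\mathrm{const}/\sqrt{\epsilon})^{D-r^2}$ whose induced channels are pairwise $\epsilon/2$-separated in diamond norm. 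It remains to check that $D-r^2=2rd_Ad_B-d_A^2-r^2=\Theta(rd_Ad_B)$: writing $u\coloneqq d_A/(rd_B)\in(0,1]$ and $v\coloneqq r/(d_Ad_B)\in(0,1]$ with $uv=1/d_B^2\le 1/4$ (using $d_B\ge 2$), a brief extremal analysis on the region $\{u,v\in(0,1], uv\le 1/4\}$ gives $u+v\le 5/4$, so $2rd_Ad_B-d_A^2-r^2\ge\tfrac{3}{4}rd_Ad_B$. Taking logarithms delivers $\log\mathcal{M}\ge\Omega(rd_Ad_B\log(1/\epsilon))$.

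\emph{Main obstacle.} The delicate step is the gauge-quotient accounting in the lower bound: transferring a Stiefel-manifold packing of isometries to a diamond-norm packing of channels while losing only $r^2$ in the exponent. Making the tube-volume estimate on the $\mathrm{U}(r)$-orbit uniform in $V$, and verifying that $2rd_Ad_B-d_A^2-r^2$ truly remains $\Theta(rd_Ad_B)$ under the admissibility constraints, is where the content of the proof lies; the hypothesis $d_B\ge 2$ is essential, since for $d_B=1$ the set $\mathcal{C}_{d_A,d_B,r}$ collapses to a single point and the claimed scaling breaks down.
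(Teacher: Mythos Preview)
Your proposal is correct and takes essentially the same route as the paper. The paper delegates both bounds to an external quotient-packing result (Lemma~4 of \cite{barthel2018fundamental}, Lemma~10 of \cite{zhao2024learning}) asserting that $\log\mathcal{M}(\mathcal{C}_{d_A,d_B,r},\|\cdot\|_\diamond,\epsilon)$ is asymptotically $\log\mathcal{M}(\mathcal{V}_{d_A\to rd_B},\|\cdot\|,\cdot)-\log\mathcal{M}(\mathrm{U}(r),\|\cdot\|,\cdot)$, up to a quadratic change of scale coming from Lemma~\ref{lem:cont-dilation}; your greedy orbit-pruning is precisely the mechanism behind those cited lemmas, and the tube-volume step you correctly flag as the main obstacle is exactly what they make rigorous. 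Your verification that $2rd_Ad_B-d_A^2-r^2\ge\tfrac{3}{4}rd_Ad_B$ via $u+v\le 5/4$ on $\{u,v\in(0,1],\,uv\le 1/4\}$ is the paper's convexity bound $\tfrac{d_A}{r}+\tfrac{r}{d_A}\le d_B+\tfrac{1}{d_B}$ in different coordinates, arriving at the same constant $3/4$. The only departure is that you run the upper bound through the Choi parametrization (rank-$r$ states with fixed marginal) rather than reusing the Stinespring quotient; both manifolds have real dimension $2rd_Ad_B-r^2-d_A^2=O(rd_Ad_B)$, so either works.
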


\begin{proof}
    The proof of Lemma~4 in~\cite[arXiv version]{barthel2018fundamental} (see also Lemma~10 in~\cite{zhao2024learning}), combined with Lemma~\ref{lem:cont-dilation}, shows that $\mathcal{M}(\mathcal{C}_{d_A, d_B, r}, \|\cdot \|_\diamond, \epsilon)$ is asymptotically bounded from both sides by the packing number of $\mathcal{V}_{d_A\to rd_B}$ in $\|\cdot \|$ divided by the packing number of $\mathcal{V}_{r\to r}$ in $\|\cdot \|$ up to a quadratic difference in $\epsilon$.
    When we take the logarithm, the division becomes subtraction.
    Using Lemma~\ref{lem:packing-iso}, we have
    \begin{equation}
    \begin{split}
        \log\mathcal{M}(\mathcal{C}_{d_A, d_B, r}, \|\cdot \|_\diamond, \epsilon) &= \Theta((2rd_Ad_B-d_A^2)\log(1/\epsilon)) - \Theta(r^2\log(1/\epsilon))\\
        &=\Theta((2rd_Ad_B-d_A^2-r^2)\log(1/\epsilon))
    \end{split}
    \end{equation}
    Further note that $2rd_Ad_B-d_A^2 - r^2\leq 2rd_Ad_B$ and 
    \begin{equation}
    \begin{split}
        2rd_Ad_B-d_A^2 - r^2 &= rd_Ad_B \left( 2 - \frac{1}{d_B}\left(\frac{d_A}{r} +  \frac{r}{d_A}\right) \right) \\
        &\geq rd_Ad_B \left( 2 - \frac{1}{d_B}\left(d_B + \frac{1}{d_B}\right) \right) \\
        &=rd_Ad_B\left( 1-\frac{1}{d_B^2} \right) \\
        &\geq \frac{3}{4}rd_Ad_B,
    \end{split}
    \end{equation}
    when $d_B\geq 2$.
    Here, we used the fact that $r\leq d_Ad_B$ and $rd_B\geq d_A$, and that the function $f(x)=x+1/x$ is convex and hence its maximum on $d_A/r\in [1/d_B, d_B]$ must be attained at the endpoints.
    This means that $2rd_Ad_B-d_A^2 - r^2 = \Theta(rd_Ad_B)$ and therefore
    \begin{equation}
        \log\mathcal{M}(\mathcal{C}_{d_A, d_B, r}, \|\cdot \|_\diamond, \epsilon) = \Theta\left(rd_A d_B \log(1/\epsilon)\right).
    \end{equation}
\end{proof}

The following lemma helps us work through the binomial factors and calculate the query complexity bound.

\begin{lemma}
\label{lem:binomial}
    Let $n, d$ be positive integers.
    Suppose $\log \binom{n+d-1}{n}\geq c(d-1)$ for some constant $c>0$; then $n\geq g^{-1}(c)(d-1)$, where $g(x)= (1+x)\log(1+x)-x\log(x)$, called the `bosonic entropy function', is monotonically increasing.
\end{lemma}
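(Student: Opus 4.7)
The plan is to upper-bound $\log\binom{n+d-1}{n}$ by $(d-1)\,g(n/(d-1))$ and then invoke the monotonicity of $g$ to extract a lower bound on $n$. The underlying observation is that the multiset count $\binom{n+d-1}{n}$ is naturally governed by the Bose-Einstein-type entropy $g$, and the cleanest way to see this is via the standard binary-entropy bound on binomial coefficients.

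Concretely, the starting point is the inequality
\[
\log\binom{N}{k}\le N\,H(k/N),\qquad H(p)\coloneqq -p\log p-(1-p)\log(1-p),
\]
valid for all integers $0\le k\le N$. I would then set $N=n+d-1$, $k=n$, and introduce $x\coloneqq n/(d-1)$, so that $N=(1+x)(d-1)$, $k/N=x/(1+x)$, and $1-k/N=1/(1+x)$. A short algebraic manipulation then yields
\[
N\,H(k/N)=(1+x)(d-1)\Big[-\tfrac{x}{1+x}\log\tfrac{x}{1+x}-\tfrac{1}{1+x}\log\tfrac{1}{1+x}\Big]=(d-1)\big[(1+x)\log(1+x)-x\log x\big]=(d-1)\,g(x).
\]
Combined with the hypothesis $\log\binom{n+d-1}{n}\ge c(d-1)$, this gives $g(n/(d-1))\ge c$.

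To finish, I would verify monotonicity of $g$ by a one-line derivative computation: $g'(x)=\log(1+x)-\log x=\log(1+1/x)>0$ for all $x>0$, so $g$ is strictly increasing on $(0,\infty)$ and $g^{-1}$ is well-defined on its image. Applying $g^{-1}$ to both sides of $g(n/(d-1))\ge c$ yields $n/(d-1)\ge g^{-1}(c)$, i.e., $n\ge g^{-1}(c)(d-1)$, as claimed.

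There is no genuine obstacle here: the only substantive step is the algebraic identity $N\,H(k/N)=(d-1)\,g(n/(d-1))$, which is a direct calculation; everything else is standard. The content of the lemma is essentially to package the well-known entropy bound on binomial coefficients in a form tailored to inverting the distinguishability estimate from Theorem~\ref{lem:limitation} so as to read off the query complexity lower bound in Theorem~\ref{thm:channel-learning}.
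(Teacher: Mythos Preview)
Your proposal is correct and follows essentially the same route as the paper: both bound $\log\binom{n+d-1}{n}$ by $(n+d-1)H\!\left(\tfrac{n}{n+d-1}\right)$ (the paper derives this inequality via the binomial expansion of $1=(p+(1-p))^{n+d-1}$ rather than citing it), rewrite it as $(d-1)\,g(n/(d-1))$, and invert using $g'(x)=\log(1+1/x)>0$. The only omission is the trivial boundary case $d=1$, where your substitution $x=n/(d-1)$ is undefined; the paper dispatches this in one line.
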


\begin{proof}
    When $d=1$, the lemma clearly holds.
    When $d\geq 2$, we begin by relating the log binomial coefficient to the binary entropy function $H_2(p) \coloneqq -p\log p-(1-p)\log(1-p)$.
    Note that
    \begin{equation}
    \begin{split}
        1 &= \left(\frac{d-1}{n+d-1}+1-\frac{d-1}{n+d-1}\right)^{n+d-1}\\
        &=\sum_{i=0}^{n+d-1}\binom{n+d-1}{i}\left(\frac{d-1}{n+d-1}\right)^{i}\left(1-\frac{d-1}{n+d-1}\right)^{(n+d-1)-i}\\
        &\geq \binom{n+d-1}{d-1} \left(\frac{d-1}{n+d-1}\right)^{d-1}\left(1-\frac{d-1}{n+d-1}\right)^{(n+d-1)-(d-1)}\\
        &=\binom{n+d-1}{d-1} 2^{-(n+d-1)H_2\left(\frac{d-1}{n+d-1}\right)}.
    \end{split}
    \end{equation}
    Thus,
    \begin{equation}
        c(d-1)\leq \log\binom{n+d-1}{n} = \log\binom{n+d-1}{d-1} \leq (n+d-1)H_2\left(\frac{d-1}{n+d-1}\right).
    \end{equation}
    Let $x=\frac{n}{d-1}>0$.
    We have
    \begin{equation}
        g(x) = (1+x)H_2\left(\frac{1}{1+x}\right)\geq c.
    \end{equation}
    Note that the bosonic entropy function $g(x)$ is monotonically increasing, since it has derivative $\log(1+1/x)>0$ for all $x>0$. Therefore, we have $x\geq g^{-1}(c)$ and
    \begin{equation}
        n\geq g^{-1}(c)(d-1).
    \end{equation}
    This concludes the proof.
\end{proof}

Now we are ready to prove the lower bound in Theorem~\ref{thm:channel-learning}.

\begin{proof}[Proof of the lower bound in Theorem~\ref{thm:channel-learning}]
    Consider any quantum algorithm that learns $\Phi$ to $\epsilon=\Theta(1)$ error with success probability at least $2/3$ using $n$ queries.
    It is allowed to query $\Phi$ in an arbitrary way (e.g.\ query the inverse and controlled versions of $\Phi$ if they exist, or with indefinite causal order), as in Theorem~\ref{lem:limitation}.
    We take a maximal $3\epsilon$-packing net $\mathcal{M}=\{\Phi_x\}_{x=1}^{|\mathcal{M}|}$ in diamond norm over the set of channels with input dimension $d_A$, output dimension $d_B$, and rank $r$.
    Lemma~\ref{lem:packing-channel} asserts that the cardinality of this net satisfies
    \begin{equation}
        \log|\mathcal{M}|= \Theta(rd_Ad_B\log(1/\epsilon)).
    \end{equation}
    Now we construct a channel distinguishing algorithm that identifies elements of the net $\mathcal{M}$.
    Specifically, we run the channel learning algorithm that makes $n$ queries to any $\Phi_x\in \mathcal{M}$ and outputs a classical description of a channel $\hat{\Phi}$.
    The learning guarantee implies that with probability at least $2/3$, we have $\|\hat{\Phi}-\Phi_x\|_\diamond\leq \epsilon$.
    Triangle inequality then asserts that for any $x'\in [|\mathcal{M}|], x'\neq x$, we have
    \begin{equation}
	   \|\hat{\Phi}-\Phi_{x'}\|_\diamond\geq \|\Phi_{x}-\Phi_{x'}\|_\diamond-\|\hat{\Phi}-\Phi_{x}\|_\diamond \geq 3\epsilon-\epsilon=2\epsilon > \epsilon = \|\hat{\Phi}-\Phi_{x}\|_\diamond.
    \end{equation}
    This means that the channel in the net that is closest to the estimate $\hat{\Phi}$ is unique and exactly $\Phi_x$ itself.
    We can find this closest channel by brute force enumerating all elements of the net.
    This gives a channel distinguishing algorithm with success probability at least $2/3$.
    The channel distinguishing bound Theorem~\ref{lem:limitation} immediately implies that
    \begin{equation}
	\Theta(rd_Ad_B\log(1/\epsilon))=\frac{1}{2}\log|\mathcal{M}|\leq \log\binom{n+rd_Ad_B-1}{n}.
    \end{equation}
    Using Lemma~\ref{lem:binomial}, we arrive at
    \begin{equation}
	n\geq \Omega(rd_Ad_B),
    \end{equation}
    as desired.
    This completes the proof of Theorem~\ref{thm:channel-learning}.
\end{proof}

\section{Conclusion}\label{sec_con}
In this work, we introduce the random Stinespring superchannel, a channel-level analogue of random purification for quantum states. This procedure enables the conversion of multiple parallel uses of an arbitrary quantum channel into equally many parallel uses of the same uniformly random Stinespring isometry, using universal and efficiently implementable encoding and decoding operations. Our proof combines techniques from quantum Shannon theory~\cite{Chiribella2008} to establish the existence of such encoding and decoding operations with representation-theoretic tools based on Schur–Weyl duality to construct an explicit and efficient circuit that realises it.

Beyond its conceptual relevance, the random Stinespring superchannel has concrete
implications for quantum Shannon theory and quantum learning theory. On the quantum Shannon theory side, we show that it yields channel-level extensions of Uhlmann's theorem for quantum divergences~\cite{Mazzola_2025, Fang2025-variational,random_pur_simple}. On the quantum learning theory side, it implies that tomography of quantum channels reduces to tomography of isometries, leading to the recently established upper bounds on the query complexity of quantum channel learning~\cite{AMele2025, chen2025quantumchanneltomographyestimation}. 
As a complementary
result, we derive an improved lower bound on the query complexity that holds even
for the most general classes of queries, including those with inverse and controlled queries and indefinite causal order. 
This is shown by developing a lower bound technique that is purely algebraic and reinforces the simple intuition from dimension counting, completely circumventing the heavy representation theory machinery previously used for unitaries.
We expect that this proof strategy can be applied to quantum channels with other parameterisations beyond bounded rank.
Taken together, these results establish that the optimal query complexity for tomography of quantum channels with input dimension $d_A$, output dimension $d_B$, and Choi rank $r$ scales as $\Theta(d_A d_B r)$, without additional logarithmic factors. In particular, this shows that the upper bound obtained in~\cite{AMele2025} and later reproved in~\cite{chen2025quantumchanneltomographyestimation} is indeed optimal.


We expect our efficient construction of the random Stinespring superchannel to have applications in other fields beyond quantum learning theory and quantum Shannon theory. For example, it may have applications in quantum thermodynamics, specifically in designing quantum thermodynamic protocols by reducing many copies of mixed states to pure states, where energy-optimal and provably-efficient thermodynamic protocols have been developed~\cite{zhao2025learning}.

An intriguing open problem concerns the adaptive setting. Specifically, it remains unclear whether $n$ uses of a quantum channel can be converted into $n$ possibly adaptive uses of a randomly chosen Stinespring isometry associated with the
channel. Another promising open direction is whether, in the same spirit as
in~\cite{WalterWitteveen_2025, cv_purification}, where a protocol is introduced to convert $n$ copies of a Gaussian mixed state into $n$ copies of a randomly chosen Gaussian purification, one can convert $n$ queries of a Gaussian bosonic or fermionic channel into $n$ queries of a randomly chosen Gaussian Stinespring isometry. Such a result would have direct applications to bounding the query complexity of learning Gaussian channels, which has currently been done only in the special case of Gaussian unitary channels~\cite{Gauss_unitary_learning}.

\subsection*{Acknowledgments}
We are grateful to Lennart Bittel, Hsin-Yuan Huang, Iman Marvian, Antonio Anna Mele, and John Wright for inspiring discussions. In particular, we are deeply grateful to Lennart Bittel: early in this project, we had arrived at an incorrect argument purporting to rule out the existence of a random Stinespring superchannel; then his careful feedback revealed the flaw in that reasoning and prompted us to revisit the problem, ultimately leading to the results presented here. MF thanks Giacomo De Palma for his kind hospitality at the University of Bologna, where part of this work was done. FG, FAM, and LL acknowledge financial support from the European Union (ERC StG ETQO, Grant Agreement no.\ 101165230).
The Institute for Quantum Information and Matter is an NSF Physics Frontiers Center (PHY-2317110).

\subsection*{Data Availability Statement}
This is a purely mathematical work and no data was created or analysed in this study.

\bibliography{biblio}

\end{document}